\theoremstyle{plain}
\newtheorem{proposition}{Proposition}
\newtheorem{lemma}{Lemma}
\newtheorem{theorem}{Theorem}
\newtheorem{assumption}{Assumption}
\newtheorem{definition}{Definition}
\font\SYM=msbm10
\newcommand{\Real}{\mbox{\SYM R}}
\newcommand{\Complex}{\mbox{\SYM C}}
\newcommand{\Sphere}{\mbox{\SYM S}}
\font\tenscr=rsfs10 scaled1100
\font\sevenscr=rsfs7 
\font\fivescr=rsfs5 
\begin{document}


\title{\textbf{The ``non-Kerrness'' of domains of outer
    communication of black holes and exteriors of stars}}

\author{{\Large Thomas B\"ackdahl} \thanks{E-mail address:
{\tt t.backdahl@qmul.ac.uk}} \\
\vspace{5mm}
{\Large Juan A. Valiente Kroon} \thanks{E-mail address:
{\tt j.a.valiente-kroon@qmul.ac.uk}}\\
School of Mathematical Sciences,\\
 Queen Mary University of London, \\
Mile End Road, London E1 4NS, UK.}

\maketitle

\begin{abstract}
In this article we construct a geometric invariant for initial data
sets for the vacuum Einstein field equations
$(\mathcal{S},h_{ab},K_{ab})$, such that $\mathcal{S}$ is a
3-dimensional manifold with an asymptotically Euclidean end and an
inner boundary $\partial \mathcal{S}$ with the topology of the
2-sphere. The hypersurface $\mathcal{S}$ can be though of being in the
domain of outer communication of a black hole or in the exterior of a
star. The geometric invariant vanishes if and only if
$(\mathcal{S},h_{ab},K_{ab})$ is an initial data set for the Kerr
spacetime. The construction makes use of the notion of Killing spinors
and of an expression for a \emph{Killing spinor candidate} which can
be constructed out of concomitants of the Weyl tensor.
\end{abstract}

\medskip
\noindent
\textbf{PACS:} 04.20.Ex, 04.20.Jb, 04.70.Bw

\section{Introduction}
Let $(\mathcal{S}, h_{ab}, K_{ab})$ be an initial
data set for the vacuum Einstein field equations such that
$\mathcal{S}$ has two asymptotically Euclidean ends, but otherwise
trivial topology\footnote{More precisely, $\mathcal{S} \approx
(\Real^3\setminus \mathcal{B}_1) \# (\Real^3\setminus \mathcal{B}_1)$
where $\mathcal{B}_1$ denotes an open ball of radius 1 and $\#$
indicates that the boundaries of the two copies of $(\Real^3\setminus
\mathcal{B}_1)$ are identified in the trivial way.}. In
\cite{BaeVal10a} a geometric invariant for this type of initial data
sets has been constructed ---see also \cite{BaeVal10b} for a detailed
discussion. This invariant is a non-negative number having the
property that it vanishes if and only if the initial data set
corresponds to data for the Kerr spacetime. Thus, the invariant
measures the \emph{non-Kerrness} of the initial
data.

\medskip
In view of possible applications of the non-Kerrness to the problem of
the uniqueness of stationary black holes and the non-linear stability
of the Kerr spacetimes a different type of initial hypersurface is of
more interest: a 3-dimensional hypersurface with the topology of the
complement of an open ball in $\Real^3$, $\mathcal{S} \approx
(\Real^3 \setminus \mathcal{B}_1)$. This type of 3-manifold
can be thought of as a Cauchy hypersurface in the domain of outer 
communication of a black hole or the exterior of a star. In the
present article we discuss the construction of a geometric invariant
measuring the non-Kerrness of this type of initial hypersurface.

\subsection*{Outline of the article}
In section \ref{Section:SummaryNonKerrness} we provide a brief
summary of the theory of non-Kerrness invariants developed in
\cite{BaeVal10a,BaeVal10b}. This is provided for quick reference and
contains the essential ingredients required in the
construction of the present article. Section
\ref{Section:KSCandidate} contains a discussion of properties of
vacuum Petrov type D spacetimes which are relevant for our
discussion. In particular, it provides a formula of a Killing spinor
candidate written entirely in terms of concomitants of the Weyl
tensor. For a spacetime that is exactly of Petrov type D, this
expression provides a Killing spinor of the spacetime. This expression
is used in the sequel to provide the boundary value of an elliptic
problem. Section \ref{Section:BoundaryValueProblem} provides a
discussion of a boundary value problem for the approximate Killing
spinor equation. Section \ref{Section:Invariant} makes use of the
solution to the boundary value problem to construct the non-Kerrness
invariant. Finally, in section \ref{Section:Conclusions} we provide
some conclusions and outlook. The article also includes two
appendices. The first one provides a summary of the results on
boundary value problems for elliptic systems used in our
construction. The second appendix contains an improved theorem
characterising the Kerr spacetime in terms of Killing spinors. This
theorem removes some technical assumptions made in
\cite{BaeVal10a,BaeVal10b}.

\subsection*{Notation}
All throughout, $(\mathcal{M},g_{\mu\nu})$ will denote an orientable
and time orientable, globally hyperbolic vacuum spacetime. It follows
that the spacetime admits a spin structure \cite{Ger68, Ger70c}. In
what follows, $\mu,\; \nu,\ldots$ will denote abstract 4-dimensional
tensor indices. The metric $g_{\mu\nu}$ will be taken to have
signature $(+,-,-,-)$. Let $\nabla_\mu$ denote the Levi-Civita
connection of $g_{\mu\nu}$. The triple $(\mathcal{S},h_{ab},K_{ab})$
will denote initial data on a hypersurface of the spacetime
$(\mathcal{M},g_{\mu\nu})$. The symmetric tensors $h_{ab}$, $K_{ab}$
will correspond, respectively, to the 3-metric and the extrinsic
curvature of the 3-manifold $\mathcal{S}$. The metric $h_{ab}$ will be
taken to be negative definite. The indices $a,\,b,\ldots$ will denote
abstract 3-dimensional tensor indices, while $i,\,j,\ldots$ will
denote 3-dimensional tensor coordinate indices.  Let $D_a$ denote the
Levi-Civita covariant derivative of $h_{ab}$. Spinors will be used
systematically. We follow the conventions of \cite{PenRin84}.  In
particular, $A,\,B,\ldots$ will denote abstract spinorial indices,
while $\mathbf{A}, \,\mathbf{B},\ldots$ will be indices with respect
to a specific frame.  Let $\nabla_{AA'}$ denote the spinorial
counterpart of the spacetime connection $\nabla_\mu$. Besides the
connection $\nabla_{AA'}$, two other spinorial connections will be
used: $D_{AB}$, the spinorial counterpart of the Levi-Civita covariant
derivative $D_a$ and $\nabla_{AB}$, spinorial version of the Sen
covariant derivative of $(\mathcal{S},h_{ab},K_{ab})$.

\section{Killing spinors and non-Kerrness}
\label{Section:SummaryNonKerrness}
In this section we provide a brief account of the theory of
non-Kerrness developed in \cite{BaeVal10a,BaeVal10b}.

\subsection{Killing spinors and Killing spinor initial data}
The starting point of the construction in \cite{BaeVal10a, BaeVal10b}
is the space-spinor decomposition of the Killing spinor equation
\begin{equation}
\nabla_{A'(A} \kappa_{BC)}=0,
\label{KillingSpinorEquation}
\end{equation}
where $\kappa_{AB}=\kappa_{(AB)}$ and the spinorial conventions of
\cite{PenRin84} are being used.

\medskip
Important for our purposes is the idea of how to encode that the
development of an initial data set $(\mathcal{S},h_{ab},K_{ab})$
admits a solution to the Killing spinor equation
\eqref{KillingSpinorEquation}. This question can be addressed by means
of a space-spinor formalism ---see e.g. \cite{Som80}, and moreover,
\cite{BaeVal10b} for a detailed account of the conventions being used.

\medskip
The space-spinor decomposition of equation
\eqref{KillingSpinorEquation} renders a set of 3 conditions intrinsic
to the hypersurface $\mathcal{S}$:
\begin{subequations}
\begin{eqnarray}
&& \xi_{ABCD}=0,\label{kspd1}\\
&& \Psi_{(ABC}{}^F\kappa_{D)F}=0, \label{kspd2}\\
&& 3\kappa_{(A}{}^E\nabla_B{}^F\Psi_{CD)EF}+\Psi_{(ABC}{}^F\xi_{D)F}=0,\label{kspd3}
\end{eqnarray}
\end{subequations}
where we have written
\[
\xi_{ABCD} \equiv \nabla_{(AB} \kappa_{CD)}, \quad \xi_{AB} \equiv
\frac{3}{2}\nabla_{(A}{}^{D}\kappa_{B)D}, \quad \xi \equiv \nabla^{PQ}\kappa_{PQ},
\]
and $\nabla_{AB}$ denotes the spinorial version of the Sen connection
associated to the pair $(h_{ab},K_{ab})$ of intrinsic metric and
extrinsic curvature. It can be expressed in terms of the spinorial
counterpart, $D_{AB}$ of the Levi-Civita connection of the 3-metric
$h_{ab}$, and the spinorial version, $K_{ABCD}=K_{(AB)(CD)}=K_{CDAB}$,
of the second fundamental form $K_{ab}$. For example, given a valence
1 spinor $\pi_{A}$ one has that
\[
\nabla_{AB}\pi_C = D_{AB} \pi_C +\tfrac{1}{2} K_{ABC}{}^Q \pi_Q,
\]  
with the obvious generalisations to higher valence spinors. In
equations \eqref{kspd2}-\eqref{kspd3}, the spinor $\Psi_{ABCD}$
denotes the restriction to the hypersurface $\mathcal{S}$ of the self-dual Weyl
 spinor. Crucially, the spinor $\Psi_{ABCD}$ can be written
entirely in terms of initial data quantities via the relations:
\[
\Psi_{ABCD} = E_{ABCD} + \mbox{i} B_{ABCD},
\]
with
\begin{eqnarray*}
&&  E_{ABCD}= -r_{(ABCD)} + \tfrac{1}{2}\Omega_{(AB}{}^{PQ}\Omega_{CD)PQ}
- \tfrac{1}{6}\Omega_{ABCD}K, \\
&&  B_{ABCD}=-\mbox{i}\ D^Q{}_{(A}\Omega_{BCD)Q},
\end{eqnarray*}
and where $\Omega_{ABCD}\equiv K_{(ABCD)}$, $K\equiv K_{PQ}{}^{PQ}$. Furthermore,
the spinor $r_{ABCD}$ is the Ricci tensor, $r_{ab}$, of the 3-metric
$h_{ab}$.

\medskip
The key property of the equations \eqref{kspd1}-\eqref{kspd3} is
contained in the following result proven in \cite{BaeVal10b} ---see
also \cite{GarVal08a}.

\begin{proposition}
Let equations \eqref{kspd1}-\eqref{kspd3} be satisfied for a symmetric
spinor $\check{\kappa}_{AB}$ on an open set $\mathcal{U}\subset
\mathcal{S}$, then the Killing spinor equation
\eqref{KillingSpinorEquation} has a solution, $\kappa_{AB}$, on the
future domain of dependence $\mathcal{D}^+(\mathcal{U})$.
\end{proposition}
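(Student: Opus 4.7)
My plan is to apply the standard ``propagation of constraints'' strategy: split the Killing spinor equation \eqref{KillingSpinorEquation} into evolution and constraint pieces along a foliation by the initial surface, solve the evolution problem, and then show that the constraint-violation satisfies a closed homogeneous subsidiary system which vanishes on $\mathcal{U}$ and hence on $\mathcal{D}^+(\mathcal{U})$.

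More concretely, I would first use a unit normal $\tau^{AA'}$ to $\mathcal{S}$ to perform a space-spinor decomposition of $\nabla_{A'(A}\kappa_{BC)}=0$. This gives a symmetric hyperbolic evolution system for $\kappa_{AB}$ together with its derived transverse quantities (essentially $\xi_{AB}$ and $\xi$), plus a set of intrinsic conditions on each leaf of the foliation which, on $\mathcal{S}$, coincide with \eqref{kspd1}-\eqref{kspd3}. Using $\check\kappa_{AB}$ on $\mathcal{U}$ together with the quantities $\xi_{AB}$, $\xi$ computed from it as initial data, the evolution part can be solved in the usual sense on $\mathcal{D}^+(\mathcal{U})$, producing a candidate $\kappa_{AB}$.

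Next, I would introduce the zero-quantity $H_{A'ABC}\equiv \nabla_{A'(A}\kappa_{BC)}$, which vanishes precisely when $\kappa_{AB}$ solves \eqref{KillingSpinorEquation}. Differentiating $H_{A'ABC}$, commuting covariant derivatives, and repeatedly using the vacuum second Bianchi identity $\nabla^A{}_{A'}\Psi_{ABCD}=0$, I would derive a first-order linear homogeneous system for the irreducible parts of $H_{A'ABC}$ whose principal part is symmetric hyperbolic. By construction the transverse component of $H_{A'ABC}$ vanishes on $\mathcal{U}$; the tangential components are precisely the left-hand sides of \eqref{kspd1}-\eqref{kspd3} and so also vanish by hypothesis. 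Uniqueness for symmetric hyperbolic systems then forces $H_{A'ABC}\equiv 0$ on $\mathcal{D}^+(\mathcal{U})$, which is the desired conclusion.

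The hard part will be verifying that the subsidiary system for $H_{A'ABC}$ genuinely closes. One needs the curvature terms arising from commuting $\nabla_{A'(A}$ with itself to be expressible linearly in $H_{A'ABC}$ and in quantities algebraically or differentially built from it, without an independent inhomogeneous source in the free curvature $\Psi_{ABCD}$ and $\kappa_{AB}$. This is exactly the content of the vacuum commutator identities developed in \cite{GarVal08a}, and it is what singles out the particular combinations \eqref{kspd1}-\eqref{kspd3}: the algebraic condition \eqref{kspd2} is the spinorial Buchdahl-type obstruction, while \eqref{kspd3} is its differential counterpart needed to close the propagation system on the initial hypersurface.
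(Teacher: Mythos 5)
The paper does not actually spell out a proof of this proposition; it defers to \cite{BaeVal10b} and \cite{GarVal08a}, where the argument is precisely the propagation-of-constraints strategy you describe (there implemented by solving a second-order wave equation for $\kappa_{AB}$ and deriving homogeneous wave equations for the zero-quantities, rather than a first-order symmetric hyperbolic reduction; either technical realisation is acceptable). So your overall strategy is the right one and matches the cited proof in spirit.

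There is, however, one step that is wrong as literally written. You claim that, on $\mathcal{U}$, the tangential components of $H_{A'ABC}\equiv\nabla_{A'(A}\kappa_{BC)}$ ``are precisely the left-hand sides of \eqref{kspd1}--\eqref{kspd3}''. They are not: the space-spinor decomposition of $H_{A'ABC}$ yields only the evolution part (which vanishes by construction of your candidate $\kappa_{AB}$) and the totally symmetric spatial part $\xi_{ABCD}=\nabla_{(AB}\kappa_{CD)}$, i.e.\ only condition \eqref{kspd1}. Conditions \eqref{kspd2} and \eqref{kspd3} are not components of $H_{A'ABC}$ at all; they enter because the subsidiary system for $H_{A'ABC}$ alone does \emph{not} close homogeneously. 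Commuting derivatives produces the Buchdahl-type spinor $\Psi_{(ABC}{}^{F}\kappa_{D)F}$ as an independent source, so this quantity (and, in the treatment of \cite{GarVal08a,BaeVal10b}, also the Killing-equation zero-quantity built from $\xi_{AA'}=\nabla^{Q}{}_{A'}\kappa_{AQ}$) must be adjoined as an additional unknown of the subsidiary system with its own propagation equation; \eqref{kspd2} and \eqref{kspd3} are then exactly the conditions ensuring that the Cauchy data for this \emph{enlarged} homogeneous system vanish on $\mathcal{U}$. Your closing paragraph shows you are aware of this, but it contradicts the middle of your argument: as stated, ``uniqueness for symmetric hyperbolic systems forces $H_{A'ABC}\equiv 0$'' does not follow, because the system you propose to apply uniqueness to is not the closed one. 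The proof goes through once the set of zero-quantities is enlarged as above and the vanishing of their initial data is identified with \eqref{kspd1}--\eqref{kspd3}.
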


\subsection{Approximate Killing spinors}

The \emph{spatial Killing spinor equation} \eqref{kspd1} can be
regarded as a (complex) generalisation of the conformal Killing vector 
equation. It will play a special role in our considerations. As in the
case of the conformal Killing equation, equation \eqref{kspd1} is
clearly overdetermined. However, one can construct a generalisation of
the equation which under suitable circumstances can always be expected
to have a solution.  One can do this by composing the operator in \eqref{kspd1} with
its formal adjoint ---see \cite{BaeVal10a}. This procedure renders the
equation
\begin{equation}
\mathbf{L}\kappa_{CD} \equiv \nabla^{AB} \nabla_{(AB} \kappa_{CD)}-\Omega^{ABF}{}_{(A}\nabla_{|DF|}\kappa_{B)C}-\Omega^{ABF}{}_{(A}\nabla_{B)F}\kappa_{CD}=0, \label{ApproximateKillingSpinorEquation} 
\end{equation}
which will be called the \emph{approximate Killing spinor
  equation}. One has the following result proved in \cite{BaeVal10b}:

\begin{lemma}
The operator $\mathbf{L}$ defined by the left hand side of equation
\eqref{ApproximateKillingSpinorEquation} is a formally self-adjoint elliptic
 operator.
\end{lemma}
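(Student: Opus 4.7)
The plan is to recognise $\mathbf{L}$ as the composition $P^{*}P$, where $P$ denotes the spatial Killing spinor operator $\kappa_{CD}\mapsto\nabla_{(AB}\kappa_{CD)}$ appearing in \eqref{kspd1} and $P^{*}$ is its formal $L^{2}$ adjoint with respect to the natural spinorial inner product and the volume form of $h_{ab}$. Once this identification is made, formal self-adjointness is immediate: for symmetric 2-spinors $\kappa_{CD}$ and $\lambda_{CD}$ one has
\begin{equation*}
\langle\mathbf{L}\kappa,\lambda\rangle=\langle P^{*}P\kappa,\lambda\rangle=\langle P\kappa,P\lambda\rangle=\langle\kappa,\mathbf{L}\lambda\rangle
\end{equation*}
modulo boundary integrals. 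The explicit expression in \eqref{ApproximateKillingSpinorEquation} should then be recovered by computing $P^{*}$ directly: integrating by parts the pairing of $P\kappa_{CD}$ against a totally symmetric valence-4 spinor produces the leading contribution $\nabla^{AB}\nabla_{(AB}\kappa_{CD)}$ together with lower-order corrections, and these corrections are precisely the two $\Omega$-dependent terms, which arise because the Sen derivative $\nabla_{AB}$ differs from the Levi-Civita spinor derivative $D_{AB}$ by contributions in the extrinsic curvature spinor $\Omega_{ABCD}$.

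For ellipticity only the second-order piece survives in the principal symbol, since the $\Omega$-terms carry a single derivative. At a non-zero real covector $\xi_{a}$ with space-spinor counterpart $\xi_{AB}$ the symbol is therefore
\begin{equation*}
\sigma_{\mathbf{L}}(\xi)\kappa_{CD}=\xi^{AB}\xi_{(AB}\kappa_{CD)}=\sigma_{P}^{*}(\xi)\,\sigma_{P}(\xi)\kappa_{CD},
\end{equation*}
so ellipticity reduces to injectivity of $\sigma_{P}(\xi):\kappa_{CD}\mapsto\xi_{(AB}\kappa_{CD)}$ on symmetric 2-spinors whenever $\xi_{AB}\neq 0$. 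I would establish this injectivity via the standard decomposition of a symmetric spinor over a two-dimensional spin space as a symmetrised product of valence-1 spinors: writing $\xi_{AB}=\alpha_{(A}\beta_{B)}$ and $\kappa_{CD}=\mu_{(C}\nu_{D)}$ gives $\xi_{(AB}\kappa_{CD)}=\alpha_{(A}\beta_{B}\mu_{C}\nu_{D)}$, which under the correspondence with binary forms is the product of four linear polynomials. Since the polynomial ring has no zero divisors and $\xi_{AB}\neq 0$ forces $\alpha_{A},\beta_{A}\neq 0$, vanishing of the product forces $\mu_{A}=0$ or $\nu_{A}=0$, i.e.\ $\kappa_{CD}=0$. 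A self-adjoint injective endomorphism of the finite-dimensional space of symmetric 2-spinors is automatically bijective, so $\sigma_{\mathbf{L}}(\xi)$ is an isomorphism for every non-zero $\xi_{a}$ and $\mathbf{L}$ is elliptic.

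The main book-keeping obstacle will be the first step: carefully tracking, under integration by parts, the $\Omega_{ABCD}$ and $K$ contributions produced when commuting $\nabla^{AB}$ past the symmetrisation in $\nabla_{(AB}\kappa_{CD)}$ and using the adjoint relation for the Sen derivative, and checking that these reproduce \emph{exactly} the two $\Omega$-terms written in \eqref{ApproximateKillingSpinorEquation} (with the correct symmetrisations in the outer index pair $CD$). The symbol computation and the polynomial-factorisation argument for injectivity are by comparison short and essentially algebraic.
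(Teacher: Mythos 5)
Your proposal is correct and follows essentially the same route as the paper: the text immediately preceding the lemma defines $\mathbf{L}$ precisely as the composition of the spatial Killing spinor operator $P:\kappa_{CD}\mapsto\nabla_{(AB}\kappa_{CD)}$ with its formal adjoint (the proof itself being deferred to \cite{BaeVal10b}), so formal self-adjointness is immediate from $\mathbf{L}=P^{*}P$ and ellipticity reduces, as you argue, to injectivity of the symbol $\kappa_{CD}\mapsto\xi_{(AB}\kappa_{CD)}$, which your factorisation of symmetric spinors into symmetrised products of valence-1 spinors establishes correctly. The only points to watch in the book-keeping you defer are that the adjoint is taken with respect to the Hermitian pairing $\mu\mapsto\int\mu_{AB\ldots}\widehat{\nu}^{AB\ldots}\,\mathrm{d}\mu$ (so the conjugation $\widehat{\phantom{\nu}}$, as well as the Sen--Levi-Civita difference, contributes $\Omega$-terms), and that positive definiteness of this pairing is what lets you pass from $\sigma_{P^{*}}(\xi)\sigma_{P}(\xi)\kappa=0$ to $\sigma_{P}(\xi)\kappa=0$.
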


In \cite{BaeVal10a,BaeVal10b} it has been shown that if
  $\mathcal{S}$ has the same topology as Cauchy slices of the Kerr
  spacetime, and if the pair $(h_{ab},K_{ab})$ is suitably
  asymptotically Euclidean, then there exists a certain asymptotic
  behaviour at infinity for the spinor $\kappa_{AB}$ for which the
  approximate Killing spinor equation always admits a solution.

\medskip
If one wants to extend the construction discussed in the previous
paragraphs to a 3-manifold on, say, the  domain of outer communication
of a black hole or the exterior of a star so that $\mathcal{S} \approx
(\Real^3\setminus \mathcal{B}_1)$, then in addition to prescribing the
asymptotic behaviour of the spinor $\kappa_{AB}$ at infinity, one also
has to prescribe the behaviour at the inner boundary $\partial
\mathcal{S}$. One wants to prescribe this information in such a way
that $\kappa_{AB}$ has the \emph{right} Killing behaviour at the
boundary whenever all of the Killing spinor data equations
\eqref{kspd1}-\eqref{kspd3} are satisfied. In this article we discuss
how this can be done, and as a result we construct the non-Kerrness for
3-manifolds with topology $(\Real^3\setminus \mathcal{B}_1)$. These
3-manifolds can be interpreted as slices in the domain of outer 
communication of a black hole or slices in the exterior of a
star. It is expected that this construction will be of use in the
reformulation of problems involving the Kerr spacetime: the uniqueness
of stationary black holes, the construction of an interior for the
Kerr solution, and possibly also the evolution of non-linear
perturbations of the Kerr spacetime.

\section{Petrov type D spacetimes}
\label{Section:KSCandidate}

In order to analyse what is the right initial data to be prescribed on
the boundary $\partial \mathcal{S}$ of our initial 3-manifold
$\mathcal{S}$, we will look at some properties of vacuum spacetimes of
Petrov type D.

\subsection{The canonical form for type D}
Let $\Psi_{ABCD}$ denote the Weyl spinor of a vacuum spacetime $(\mathcal{M},g_{\mu\nu})$. We shall consider the following invariants of $\Psi_{ABCD}$:
\begin{eqnarray*}
&& \mathcal{I}\equiv \tfrac{1}{2} \Psi_{ABCD} \Psi^{ABCD}, \\
&& \mathcal{J}\equiv \tfrac{1}{6} \Psi_{ABCD} \Psi^{CDEF} \Psi_{EF}{}^{AB}.
\end{eqnarray*}
The Petrov type of the spacetime is determined as a solution of the
eigenvalue problem
\[
\Psi_{ABCD}\eta^{CD} = \lambda \eta_{AB}
\]
---see e.g. \cite{SKMHH}. The eigenvalues $\lambda$ satisfy the equation
\begin{equation}
\label{polynomial1}
\lambda^3 -\mathcal{I}\lambda -2\mathcal{J}=0.
\end{equation}
Let $\lambda_1$, $\lambda_2$, $\lambda_3$ denote the roots of the
above polynomial. The invariants $\mathcal{I}$ and $\mathcal{J}$ can be expressed in terms
of the eigenvalues by
\begin{subequations}
\begin{eqnarray}
&& \mathcal{I} = \tfrac{1}{2}\left(\lambda_1^2 + \lambda_2^2 + \lambda_3^2  \right),\label{altI}\\
&& \mathcal{J}= \tfrac{1}{2} \lambda_1 \lambda_2 \lambda_3. \label{altJ}
\end{eqnarray}
\end{subequations}
In what follows we assume $\lambda_1,\;\lambda_2,\; \lambda_3 \neq
0$. The Petrov type D is characterised by the condition
$\lambda_1=\lambda_2$. Using expressions \eqref{altI}-\eqref{altJ},
one has that the remaining root satisfies the equation
\begin{equation}
\label{polynomial2}
\lambda^3_3 -2\mathcal{I} \lambda_3 +4 \mathcal{J}=0.
\end{equation}
Combining equations \eqref{polynomial1} and \eqref{polynomial2} one finds that
\[
\lambda_3 = 6\mathcal{J}\mathcal{I}^{-1}.
\] 

\medskip
For a Petrov type D spacetime there exist spinors (the principal
spinors) $\alpha_A$, $\beta_A$ satisfying the normalisation $\alpha_A
\beta^A=1$ such that
\begin{equation}
\Psi_{ABCD} = \psi \alpha_{(A}\alpha_B \beta_C \beta_{D)}, \quad
\psi=-\tfrac{1}{2}\lambda_3=- 3\mathcal{J}\mathcal{I}^{-1}.
\label{Psi}
\end{equation}
It will be convenient to define the spinor $\upsilon_{AB}\equiv
\alpha_{(A}\beta_{B)}$. Observe that because of our normalisation
conditions one has that $\upsilon_{AB}\upsilon^{AB}=-\tfrac{1}{2}$. Using the
spinor $\upsilon_{AB}$ one obtains the following alternative
expression for $\Psi_{ABCD}$:
\begin{equation}
\label{AltPsi}
\Psi_{ABCD} = \psi \left(\upsilon_{AB}\upsilon_{CD} +\tfrac{1}{6} h_{ABCD}  \right), \quad h_{ABCD} \equiv - \epsilon_{A(C} \epsilon_{D)B}. 
\end{equation}

\medskip
The expression \eqref{AltPsi} can be used to obtain a formula for the
spinor $\upsilon_{AB}$ in terms of the Weyl spinor $\Psi_{ABCD}$. Let
$\zeta_{AB}$ denote a non-vanishing symmetric spinor. Contracting
\eqref{AltPsi} with an arbitrary spinor $\zeta_{AB}$ one obtains:
\begin{subequations}
\begin{eqnarray}
&& \Psi_{ABCD}\zeta^{CD} = \psi \left(  \upsilon_{AB} \upsilon_{PQ}\zeta^{PQ} + \tfrac{1}{6}\zeta_{AB} \right), \label{Contraction:1}\\
&& \Psi_{ABCD}\zeta^{AB}\zeta^{CD} = \psi \left( (\upsilon_{PQ}\zeta^{PQ})^2 + \tfrac{1}{6} \zeta_{PQ}\zeta^{PQ} \right). \label{Contraction:2}
\end{eqnarray}
\end{subequations}
Using equation \eqref{Contraction:1} to solve for $\upsilon_{AB}$ and
equation \eqref{Contraction:2} to solve for $\upsilon_{PQ}\zeta^{PQ}$
one obtains the following formula for $\upsilon_{AB}$ in terms of
$\Psi_{ABCD}$ and the arbitrary spinor $\zeta_{AB}$:
\begin{equation}
\label{Precandidate}
\upsilon_{AB} = \Xi^{-1/2} \left(\psi^{-1}\Psi_{ABPQ} \zeta^{PQ} -\tfrac{1}{6}\zeta_{AB}  \right), 
\end{equation}
with
\begin{equation}
\Xi \equiv \psi^{-1}\Psi_{PQRS} \zeta^{PQ}\zeta^{RS} - \tfrac{1}{6} \zeta_{PQ}\zeta^{PQ}. \label{Xi}
\end{equation}
In the last formulae it is assumed that $\zeta_{AB}$ is chosen such that
\[
\Xi\neq 0.
\]

\subsection{The Killing spinor of a Petrov type D spacetime}

Let $\kappa_{AB}$ be a solution to the Killing spinor equation
\eqref{KillingSpinorEquation}. An important property of a Killing
spinor is that
\begin{equation}
\xi_{AA'} \equiv \nabla^Q{}_{A'} \kappa_{AQ}, \label{KillingVector1}
\end{equation}
satisfies the (spinorial version of the) Killing vector equation
\[
\nabla_{AA'} \xi_{BB'} + \nabla_{BB'}\xi_{AA'}=0.
\]
In general, the Killing vector $\xi_{AA'}$ given by formula
\eqref{KillingVector1} is complex ---that is, it encodes the
information of 2 real Killing vectors. This property is closely
related to the fact that all vacuum type D spacetimes admit, at least,
a pair of commuting Killing vectors ---see e.g. \cite{Kin69}. Vacuum
spacetimes of Petrov type D for which $\xi_{AA'}$ is real are 
called \emph{generalised Kerr-NUT spacetimes.}

\medskip
Every vacuum spacetime of Petrov type D has a Killing spinor ---see
\cite{PenRin86} and references therein. Indeed, in the notation of the
previous section, one has that
\begin{equation}
\kappa_{AB} = \psi^{-1/3} \upsilon_{AB}, 
\label{Formula:kappa}
\end{equation}
satisfies equation \eqref{KillingSpinorEquation}. Using formula
\eqref{Precandidate}, one obtains the following result:

\begin{proposition}
Let $(\mathcal{M},g_{\mu\nu})$ be a vacuum spacetime. If on
$\mathcal{U}\subset \mathcal{M}$,  the spacetime is of Petrov type D and
$\zeta_{AB}$ is a symmetric spinor satisfying
\[
\zeta_{AB}\neq 0, \quad \psi^{-1}\Psi_{PQRS} \zeta^{PQ}\zeta^{RS} - \tfrac{1}{6} \zeta_{PQ}\zeta^{PQ}\neq 0 \quad \mbox{ on } \mathcal{U},
\]
then
\begin{equation}
\kappa_{AB} = \psi^{-1/3} \Xi^{-1/2} \left(\psi^{-1}\Psi_{ABPQ} \zeta^{PQ} -\tfrac{1}{6}\zeta_{AB}  \right) \label{KillingSpinorFormula}
\end{equation}
with $\Xi$ given by \eqref{Xi} is a Killing spinor on
$\mathcal{U}$. The formula \eqref{KillingSpinorFormula} is independent
of the choice of $\zeta_{AB}$.
\end{proposition}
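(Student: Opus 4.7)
The plan is to combine two ingredients already assembled in this section: the classical fact, quoted from \cite{PenRin86}, that in a vacuum type D region the spinor $\kappa_{AB}=\psi^{-1/3}\upsilon_{AB}$ of \eqref{Formula:kappa} satisfies the Killing spinor equation \eqref{KillingSpinorEquation}; and the explicit reconstruction \eqref{Precandidate} of $\upsilon_{AB}$ from $\Psi_{ABCD}$ and an auxiliary symmetric spinor $\zeta_{AB}$. Substituting \eqref{Precandidate} into \eqref{Formula:kappa} produces exactly \eqref{KillingSpinorFormula}, so once \eqref{Precandidate} is rigorously justified the first assertion of the proposition follows at once.

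To justify \eqref{Precandidate} I would retrace the algebraic steps already sketched in the text. Starting from the canonical form \eqref{AltPsi}, contracting with $\zeta^{CD}$ gives \eqref{Contraction:1}, and a further contraction with $\zeta^{AB}$ gives the scalar identity \eqref{Contraction:2}. Equation \eqref{Contraction:2} forces $(\upsilon_{PQ}\zeta^{PQ})^{2}=\Xi$, with $\Xi$ as in \eqref{Xi}; the hypothesis $\Xi\neq 0$ allows a locally consistent square root, giving $\upsilon_{PQ}\zeta^{PQ}=\pm\Xi^{1/2}$. Substituting this back into \eqref{Contraction:1} and solving algebraically for $\upsilon_{AB}$ yields \eqref{Precandidate}. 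The sign of the square root and the cube-root branch in $\psi^{-1/3}$ merely determine the overall phase of $\kappa_{AB}$, and the Killing spinor equation is linear, so the Killing spinor property is unaffected.

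For the independence from $\zeta_{AB}$, the key observation is that in a type D region the principal spinors $\alpha_{A}$, $\beta_{A}$, and hence $\upsilon_{AB}=\alpha_{(A}\beta_{B)}$, are intrinsically determined by $\Psi_{ABCD}$ (the swap $\alpha\leftrightarrow\beta$ leaves $\upsilon_{AB}$ invariant, so there is no true ambiguity at the level of $\upsilon_{AB}$). Since for every admissible $\zeta_{AB}$ the right-hand side of \eqref{Precandidate} must reproduce this same $\upsilon_{AB}$, it is independent of $\zeta_{AB}$; multiplying by the $\zeta$-independent factor $\psi^{-1/3}$ extends the conclusion to the Killing spinor \eqref{KillingSpinorFormula}.

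The only mildly delicate point is bookkeeping for the branches of $\psi^{-1/3}$ and $\Xi^{-1/2}$: one must verify that on a connected neighbourhood in $\mathcal{U}$ where $\psi\neq 0$ (guaranteed by the Petrov type D assumption together with $\lambda_{3}\neq 0$) and $\Xi\neq 0$ (by hypothesis on $\zeta_{AB}$) these branches can be chosen so that the right-hand side of \eqref{KillingSpinorFormula} defines a smooth spinor field rather than a multi-valued object. Apart from this, the proof consists solely of the algebraic identification outlined above together with the invocation of \eqref{Formula:kappa}; no further analytic input is required.
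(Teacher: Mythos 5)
Your proposal is correct and follows essentially the same route as the paper: invoke the classical type D Killing spinor $\psi^{-1/3}\upsilon_{AB}$ and substitute the algebraic reconstruction \eqref{Precandidate} of $\upsilon_{AB}$ obtained by contracting \eqref{AltPsi} with $\zeta^{AB}$. The only cosmetic difference is in the $\zeta_{AB}$-independence: the paper verifies it by expanding $\zeta_{AB}$ in the principal dyad and computing explicitly, whereas you argue abstractly that any admissible $\zeta_{AB}$ must reproduce the intrinsically determined $\upsilon_{AB}$ (up to the sign of the square root, which you correctly note is immaterial by linearity).
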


\medskip
That expression \eqref{Formula:kappa} is independent of the choice of
$\zeta_{AB}$ can be verified by writing 
\[
\zeta_{AB} = \zeta_0 \alpha_A \alpha_B+ \zeta_1 \alpha_{(A}\beta_{B)}
+ \zeta_2 \beta_A \beta_B,
\]
where $\{\alpha_A,\beta_A\}$ is the dyad given by equation
\eqref{Psi}.   Substituting the latter into \eqref{KillingSpinorFormula} one readily
obtains \eqref{Formula:kappa}.

\medskip
\noindent
\textbf{Observation.} Formula \eqref{KillingSpinorFormula} can be evaluated
for any vacuum spacetime $(\mathcal{M},g_{\mu\nu})$. In general, of
course, it will not give a solution to the Killing spinor equation
\eqref{KillingSpinorEquation}. The resulting spinor $\kappa_{AB}$
will depend upon the choice of $\zeta_{AB}$. We make the following definition:

\begin{definition}
\label{Definition:KSCandidate}
Let $(\mathcal{M},g_{\mu\nu})$ be a vacuum spacetime. Consider 
$\mathcal{U}\subset \mathcal{M}$ and on $\mathcal{U}$ a symmetric spinor
$\zeta_{AB}$  satisfying
\[
\zeta_{AB}\neq 0, \quad \psi^{-1}\Psi_{PQRS} \zeta^{PQ}\zeta^{RS} - \tfrac{1}{6} \zeta_{PQ}\zeta^{PQ}\neq 0 \quad \mbox{ on } \mathcal{U}.
\]
The symmetric spinor given by
\begin{equation}
\breve{\kappa}_{AB} = \psi^{-1/3} \Xi^{-1/2}
\left(\psi^{-1}\Psi_{ABPQ} \zeta^{PQ} -\tfrac{1}{6}\zeta_{AB}
\right), \label{Candidate}
\end{equation}
with
\[
 \Xi \equiv\psi^{-1}\Psi_{PQRS} \zeta^{PQ}\zeta^{RS} - \tfrac{1}{6}
\zeta_{PQ}\zeta^{PQ}, 
\]
will be called the $\zeta_{AB}$-Killing spinor candidate on $\mathcal{U}$.
\end{definition}

\medskip
\noindent
\textbf{Remark 1.} Although the choice of $\zeta_{AB}$ is essentially
arbitrary, as it will be seen, in many applications there is a natural
choice.

\medskip
\noindent
\textbf{Remark 2.} The choice of branch cut for the square root of
$\Xi$ can be chosen to be $\{ -r e^{i\theta}: r>0 \}$ where $\theta$
is the argument of $\psi^{-1}\Psi_{PQRS} \zeta^{PQ}\zeta^{RS} -
\tfrac{1}{6} \zeta_{PQ}\zeta^{PQ}$.

\section{A boundary value problem for the approximate Killing spinor equation} 
\label{Section:BoundaryValueProblem}

In this section we formulate a boundary value problem for the
approximate Killing spinor equation
\eqref{ApproximateKillingSpinorEquation} on a 3-manifold $\mathcal{S}
\approx \Real^3 \setminus \mathcal{B}_1$. As discussed in
the introduction, this type of 3-manifold can be thought of as a
Cauchy hypersurface in the domain of outer communication of a black
hole or the exterior of a star. For simplicity of the presentation, it
will be assumed that the initial data $(\mathcal{S},h_{ab},K_{ab})$
satisfies in its asymptotic region the behaviour:
\begin{subequations}
\begin{eqnarray}
&& h_{ij} = -\left(1+ \frac{2m}{r}\right)\delta_{ij} + o_\infty(r^{-3/2}), \label{Decay1} \\
&& K_{ij} = o_\infty(r^{-5/2}), \label{Decay2}
\end{eqnarray}
\end{subequations}
with $r=((x^1)^2 +(x^2)^2 + (x^3)^2)^{1/2}$, and $(x^1,x^2,x^3)$ are
asymptotically Cartesian coordinates. Our present discussion could be
extended at the expense of more technical details to include the case
of boosted initial data sets ---see e.g. \cite{BaeVal10b}. Here, and
in what follows, the fall off conditions of the various fields will be
expressed in terms of weighted Sobolev spaces $H^s_\beta$, where $s$
is a non-negative integer and $\beta$ is a real number. Here we use
the conventions for these spaces given in \cite{Bar86} ---see also
\cite{BaeVal10b}.  We say that $\eta\in H^\infty_\beta$ if $\eta\in
H^s_\beta$ for all $s$. Thus, the functions in $H^\infty_\beta$ are
smooth over $\mathcal{S}$ and have a fall off at infinity such that
$\partial^l \eta = o(r^{\beta-|l|})$. We will often write
$\eta=o_\infty(r^\beta)$ for $\eta\in H^\infty_\beta$ at the
asymptotic end.

\medskip
Following the ideas of \cite{BaeVal10a,BaeVal10b}, we shall look for
solutions to the approximate Killing spinor equation 
\eqref{ApproximateKillingSpinorEquation}  which expressed in terms of an
asymptotically Cartesian frame and coordinates have an asymptotic
behaviour given by
\begin{equation}
\label{Asymptotic:kappa}
\kappa_{\mathbf{AB}} = 
-\frac{\sqrt{2}}{3}\left (1+\frac{2m}{r}\right)x_{\mathbf{AB}}
+o_\infty(r^{-1/2}),
\end{equation}
with
\[
 \quad  x_{\mathbf{AB}} = \frac{1}{\sqrt{2}}
\left(
\begin{array}{cc}
-x^1 +\mbox{i}x^2 & x^3 \\
x^3 & x^1 +\mbox{i}x^2
\end{array}
\right).
\]

\subsection{Behaviour at the inner boundary}

The ideas of Section \ref{Section:KSCandidate} will be used to
prescribe the value of the spinor $\kappa_{AB}$ on the boundary $\partial
\mathcal{S}$. The configuration under consideration offers a natural
choice of spinor $\zeta_{AB}$ to evaluate the Killing spinor candidate
formula given in definition \ref{Definition:KSCandidate} ---namely,
the spinorial counterpart, $n_{AB}$, of the normal, $n_a$, to the
hypersurface $\partial \mathcal{S}$. By convention $n_a$ is assumed to
point outside $\mathcal{S}$ (outward pointing). Note that
because of the use of a negative definite 3-metric one has that
$n_{PQ}n^{PQ}=-1$. 

\medskip
 It will be convenient to define
the following set:
\[
\mathcal{Q} \equiv \{ z\in \Complex \;|\; z=\Xi(p), \;\; p \in \partial \mathcal{S}\},
\]
with
\[
\Xi =\psi^{-1}\Psi_{PQRS} n^{PQ}n^{RS} - \tfrac{1}{6}
n_{PQ}n^{PQ}. 
\]

\medskip
We shall make the following technical assumption on the initial data
set $(\mathcal{S}, h_{ab}, K_{ab})$:

\begin{assumption}
\label{TechnicalAssumption}
The initial data set $(\mathcal{S}, h_{ab}, K_{ab})$ is such that
$\Xi$ is a smooth function over $\partial \mathcal{S}$ satisfying 
\begin{itemize}
\item[(i)]  $0\not\in\mathcal{Q}$;
\item[(ii)] $\mathcal{Q}$ does not encircle
the the point $z=0$.
\end{itemize}
\end{assumption}

As a consequence of Assumption \ref{TechnicalAssumption} one can
choose a cut of the square root function on the complex plane such
that $\Xi^{-1/2}(p)$ is smooth for all $p\in \partial\mathcal{S}$.



\medskip
\noindent
\textbf{Remark 1.} The conditions in Assumption \ref{TechnicalAssumption} are satisfied by standard
Kerr data (in Boyer Lindquist coordinates) at the
horizon. Furthermore, by construction, the data
$(\mathcal{S},h_{ab},K_{ab})$ is data for the Kerr spacetime, then
the boundary data given by Killing Spinor candidate formula given by
definition \ref{Definition:KSCandidate} gives the right boundary
behaviour for the restriction of its Killing spinor to $\mathcal{S}$.

\medskip
\noindent
\textbf{Remark 2.} In order to match the asymptotic behaviour of the
$n_{AB}$-Killing spinor candidate given by
definition \ref{Definition:KSCandidate} with that given by equation
\eqref{Asymptotic:kappa} we add a normalisation factor to equation
\eqref{Candidate} to obtain
\begin{equation}
\breve{\kappa}'_{AB} = -\left(\frac{2^{5/6}m^{1/3}}{3^{1/6}}\right)\psi^{-1/3} \Xi^{-1/2}
\left(\psi^{-1}\Psi_{ABPQ} n^{PQ} -\tfrac{1}{6}n_{AB}
\right) \label{CandidateALT},
\end{equation}
where $m$ denotes the ADM mass of the asymptotic end. A direct
computation using the asymptotic expansions 
\[
\Psi_{\mathbf{ABCD}} = \frac{3m x_{(\mathbf{AB}} x_{\mathbf{CD})}}{r^5} + o_\infty(r^{-7/2}),
\quad \psi =\frac{6m}{r^3} +o_\infty(r^{-7/2}),
\]
justifies the extra normalisation. In these last expressions
$x_{\mathbf{AB}}$ points in the direction of the asymptotic end.

\subsection{Existence of solutions to the approximate Killing spinor equation}

Following the strategy put forward in \cite{BaeVal10a,BaeVal10b}, we
provide an Ansatz for a solution to the approximate Killing spinor
equation \eqref{ApproximateKillingSpinorEquation} which encodes the desired
behaviour at infinity. To this end, let 
\begin{equation}
\mathring{\kappa}_{\mathbf{AB}} \equiv -\frac{\sqrt{2}}{3}\left
  (1+\frac{2m}{r}\right)x_{\mathbf{AB}} \; \phi_R(r),
\label{LeadingBehaviour}
\end{equation}
where $\phi_R$ is a smooth cut-off function such that for $R>0$ large enough
\begin{eqnarray*}
&& \phi_R(r)=1, \quad  r\gg R, \\
&& \phi_R(r)=0, \quad r<R.
\end{eqnarray*}

\medskip
One then has the following result:

\begin{theorem}
\label{Theorem:ExistenceSolution}
Let $(\mathcal{S},h_{ab},K_{ab})$ be an initial data set for the
Einstein vacuum field equations such that $\mathcal{S}$ is a manifold
with a smooth boundary $\partial \mathcal{S}\approx \Sphere^2$
satisfying Assumption \ref{TechnicalAssumption}. Assume that
$(h_{ab},K_{ab})$ satisfy the asymptotic conditions
\eqref{Decay1}-\eqref{Decay2} with $m\neq 0$. Then, there exists a unique smooth
solution, $\kappa_{AB}$, to the approximate Killing equation
\eqref{ApproximateKillingSpinorEquation} with behaviour at the
asymptotic end of the form \eqref{Asymptotic:kappa} and with boundary
value at $\partial \mathcal{S}$ given by the $n_{AB}$-Killing spinor
candidate $\breve{\kappa}'_{AB}$ of equation \eqref{CandidateALT}.
\end{theorem}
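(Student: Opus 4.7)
The plan is to reduce the boundary value problem to an inhomogeneous Dirichlet problem for a correction spinor decaying at infinity, and then to appeal to the Fredholm theory for elliptic boundary value problems on weighted Sobolev spaces summarised in Appendix A. I write $\kappa_{AB} = \mathring{\kappa}_{AB} + \theta_{AB}$, where $\mathring{\kappa}_{AB}$ is the cut-off Ansatz \eqref{LeadingBehaviour}. Choosing the cut-off radius $R$ larger than the radius of $\partial\mathcal{S}$, one has $\mathring{\kappa}_{AB}|_{\partial\mathcal{S}}=0$, so the problem for $\theta_{AB}$ becomes
\begin{align*}
\mathbf{L}\theta_{AB} &= -\mathbf{L}\mathring{\kappa}_{AB} \quad \text{on } \mathcal{S},\\
\theta_{AB}|_{\partial\mathcal{S}} &= \breve{\kappa}'_{AB},\\
\theta_{AB} &\in H^\infty_{-1/2}.
\end{align*}
A direct asymptotic computation based on the explicit form of $\mathring{\kappa}_{AB}$ and the decay conditions \eqref{Decay1}--\eqref{Decay2} shows that $-\mathbf{L}\mathring{\kappa}_{AB}$ lies in a weighted Sobolev space with enough decay at infinity, while Assumption \ref{TechnicalAssumption}, combined with Remark 2 of Section \ref{Section:BoundaryValueProblem}, guarantees that $\breve{\kappa}'_{AB}$ is smooth on the closed surface $\partial\mathcal{S}$.

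I then invoke the Fredholm theory of Appendix A for the formally self-adjoint elliptic operator $\mathbf{L}$ with Dirichlet data on $\partial\mathcal{S}$. Since the Dirichlet condition is self-adjoint and the leading part of $\mathbf{L}$ at infinity is the componentwise flat Laplacian acting on $\kappa_{AB}$, the weight $\beta=-1/2$ is non-exceptional and the index of the associated map vanishes. Existence and uniqueness therefore reduce to triviality of the kernel. So let $\theta_{AB}\in H^\infty_{-1/2}$ satisfy $\mathbf{L}\theta_{AB}=0$ with $\theta_{AB}|_{\partial\mathcal{S}}=0$. Using that $\mathbf{L}$ is the composition of $\nabla_{(AB}\,\cdot\,{}_{CD)}$ with its formal adjoint, integration by parts yields
\[
0 \;=\; \int_{\mathcal{S}} |\nabla_{(AB}\theta_{CD)}|^2\,dS \;+\; (\text{boundary terms at } \partial\mathcal{S} \text{ and at infinity}).
\]
The inner boundary term vanishes because $\theta_{AB}$ itself vanishes there, and the asymptotic surface integral vanishes thanks to the fall-off $\theta_{AB}\in H^\infty_{-1/2}$ combined with the asymptotic flatness of $h_{ab}$. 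Hence $\nabla_{(AB}\theta_{CD)}=0$, so $\theta_{AB}$ is a spatial Killing spinor that vanishes on $\partial\mathcal{S}\approx\Sphere^2$ and decays at infinity. A prolongation argument for this overdetermined system---the tangential derivatives of $\theta_{AB}$ along $\partial\mathcal{S}$ vanish automatically, and the equation $\xi_{ABCD}=0$ algebraically expresses the remaining normal derivative in terms of these---then forces $\theta_{AB}\equiv 0$. Smoothness of the resulting $\kappa_{AB}$ is obtained from standard interior and boundary elliptic regularity for $\mathbf{L}$.

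The principal obstacle is the triviality-of-kernel step, in which two delicate points are intertwined: the weight $-1/2$ must be sharp enough to eliminate the asymptotic surface integral produced by integration by parts (as in the two-ended construction of \cite{BaeVal10a,BaeVal10b}), and the inner boundary contribution must be handled via a rigidity/prolongation argument for $\nabla_{(AB}\kappa_{CD)}=0$ adapted to the case in which the vanishing locus is a closed $\Sphere^2$ rather than an open set or a full slice. Once these two ingredients are in place, the Fredholm alternative and elliptic regularity complete the proof.
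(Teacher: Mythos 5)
Your overall architecture matches the paper's: the Ansatz $\kappa_{AB}=\mathring{\kappa}_{AB}+\theta_{AB}$ with the cut-off supported away from $\partial\mathcal{S}$, the reduction to the Dirichlet problem $\mathbf{L}\theta_{CD}=-\mathbf{L}\mathring{\kappa}_{CD}$ with $\theta_{AB}|_{\partial\mathcal{S}}=\breve{\kappa}'_{AB}$, the verification that $-\mathbf{L}\mathring{\kappa}_{CD}\in H^\infty_{-5/2}$, the appeal to L-ellipticity and the Fredholm theory of Appendix \ref{Appendix:Elliptic}, the integration by parts reducing everything to rigidity of $\nabla_{(AB}\nu_{CD)}=0$, and elliptic regularity at the end. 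The place where you genuinely diverge --- and where there is a gap --- is the rigidity step. You argue that a spatial Killing spinor vanishing on $\partial\mathcal{S}\approx\Sphere^2$ must vanish identically because ``$\xi_{ABCD}=0$ algebraically expresses the remaining normal derivative in terms of the tangential ones.'' This is not enough. The equation $\nabla_{(AB}\theta_{CD)}=0$ only constrains the totally symmetric part of $\nabla_{AB}\theta_{CD}$; the irreducible pieces $\xi_{AB}$ and $\xi$ are unconstrained at that order. Vanishing of $\theta_{AB}$ and of its tangential derivatives on $\partial\mathcal{S}$ does force the first normal derivative to vanish there (exactly as for a conformal Killing vector), but the spatial Killing spinor equation is a conformal-Killing-type overdetermined system whose solutions are fixed by a $2$-jet: the prolonged system carries $\xi_{AB}$, $\xi$ and their first derivatives as independent unknowns, and the normal derivative of $\xi$ on $\partial\mathcal{S}$ is a second derivative of $\theta_{AB}$ that your argument does not control. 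Without that, the vanishing does not propagate off the surface, and you have not closed the unique-continuation argument.

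The paper avoids this entirely: the cokernel/kernel elements $\nu_{AB}$ lie in $H^0_{-1/2}$, hence decay at infinity, and the rigidity invoked is that \emph{there are no non-trivial solutions of the spatial Killing spinor equation tending to zero at infinity} (the methods of Proposition 21 of \cite{BaeVal10b}). The vanishing of $\nu_{AB}$ on $\partial\mathcal{S}$ is used only to kill the inner boundary term in the integration by parts, not to run a boundary unique-continuation argument. If you replace your prolongation-at-the-boundary claim by this asymptotic rigidity (which your own setup already provides, since your kernel elements are in $H^\infty_{-1/2}$), the rest of your proof goes through and coincides with the paper's. As a secondary remark, your assertion that the index of $(\mathbf{L},\mathbf{B})$ vanishes for the weight $-1/2$ is plausible but unproven and unnecessary: the paper simply applies the Fredholm alternative of Proposition \ref{Fredholm:Alternative} with $\delta=-1/2$, which identifies the obstruction space directly with adjoint-kernel elements vanishing on $\partial\mathcal{S}$ and lying in $H^0_{-1/2}$, and then disposes of it by the same rigidity.
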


\begin{proof}
Following the procedure described in \cite{BaeVal10a,BaeVal10b}, we
consider the Ansatz
\begin{equation}
\label{Ansatz}
\kappa_{AB}=\mathring{\kappa}_{AB} + \theta_{AB}, \quad \theta\in H^2_{-1/2}.
\end{equation}
 Substitution into equation \eqref{ApproximateKillingSpinorEquation} renders the following equation for the spinor
$\theta_{AB}$:
\begin{equation}
\label{elliptic:general}
\mathbf{L}\theta_{CD} = -\mathbf{L}\mathring{\kappa}_{CD}.
\end{equation}
In view that  $\mathring{\kappa}_{AB}$ vanishes outside the asymptotic
region, then the value of $\theta_{AB}$ at $\partial \mathcal{S}$
coincides with that of $\kappa_{AB}$. That is, we set
\begin{equation}
\label{elliptic:boundary}
\theta_{AB}|_{\partial \mathcal{S}}=\breve{\kappa}'_{AB}.
\end{equation}
By construction it follows that 
\[
\nabla_{(AB} \mathring{\kappa}_{CD)}\in
H^\infty_{-3/2},
\]
 so that 
\[
F_{CD}\equiv-\mathbf{L}\mathring{\kappa}_{CD}\in H^\infty_{-5/2}.
\]
The operator associated to the Dirichlet elliptic boundary value
problem \eqref{elliptic:general}-\eqref{elliptic:boundary} is given by
$(\mathbf{L},\mathbf{B})$ where $\mathbf{B}$ denotes the Dirichlet
boundary operator on $\partial \mathcal{S}$. As discussed in
\cite{BaeVal10a,BaeVal10b}, under assumptions
\eqref{Decay1}-\eqref{Decay2} the operator $\mathbf{L}$ is
asymptotically homogeneous ---see Appendix \ref{Appendix:Elliptic} for a
concise summary of the ideas and results of the theory elliptic
systems being used here. Now, elliptic boundary value problems with
Dirichlet boundary conditions satisfy the Lopatinski-Shapiro
compatibility conditions ---see \cite{WloRowLaw95}. Consequently, the
operator $(\mathbf{L},\mathbf{B})$ is L-elliptic and the map
\[
(\mathbf{L},\mathbf{B}): H^2_{-1/2}(\mathcal{S}) \rightarrow
H^0_{-5/2}(\mathcal{S}) \times H^{3/2}(\partial \mathcal{S})
\]
is Fredholm ---see theorem \ref{Fredholm:properties} of Appendix
\ref{Appendix:Elliptic}. The rest of the proof is an application of
the Fredholm alternative. Using Theorem \ref{Fredholm:Alternative}
with $\delta=-1/2$, one concludes that equation
\eqref{elliptic:general} has a unique solution if $F_{AB}$ is
orthogonal to all $\nu_{AB}\in H^0_{-1/2}$ in the Kernel of
$\mathbf{L}^*=\mathbf{L}$ with $\nu_{AB}=0$ on $\partial
\mathcal{S}$. If $\mathbf{L}\nu_{AB}=0$, then an integration by parts
shows that
\[
\int_{\mathcal{S}} \nabla^{(AB} \nu^{CD)}
\widehat{\nabla_{AB}\nu_{CD}}\mbox{d}\mu= \int_{\partial\mathcal{S}} n^{AB}\nu^{CD}\widehat{\nabla_{(AB}
  \nu_{CD)}} \mbox{d}S + 
\int_{\partial\mathcal{S}_\infty} n^{AB}\nu^{CD}\widehat{\nabla_{(AB}
  \nu_{CD)}} \mbox{d}S,
\]
where $\partial \mathcal{S}_\infty$ denotes the sphere at
infinity. The boundary integral over $\partial \mathcal{S}$ vanishes
because of $\nu_{AB}\in \mbox{Ker}(\mathbf{L},\mathbf{B})$, so that
$\nu_{AB}=0$ on $\partial \mathcal{S}$. As
$\nu_{AB}\in H^2_{-1/2}$ by assumption, it follows that $\nabla_{(AB}
\nu_{CD)} \in H^\infty_{-3/2}$ and furthermore that $n^{AB} \nu^{CD}
\widehat{\nabla_{(AB}\nu_{CD)}} = o(r^{-2})$.  An integral over a finite
sphere will then be of type $o(1)$. Thus, the integral over $\partial
S_\infty$ vanishes. Hence one concludes that
\[
\nabla_{(AB} \nu_{CD)}=0, \quad \mbox{ on }  \mathcal{S}.
\]
Using the same methods as in \cite{BaeVal10b},
Proposition 21 one finds that there are no non-trivial solutions to
the spatial Killing spinor equation that go to zero at infinity. 
Thus, there are no restrictions on $F_{AB}$ and equation
\eqref{elliptic:general} has a unique solution as desired. Due to
elliptic regularity, any $H^2_{-1/2}$ solution to equation
\eqref{elliptic:general} is in fact a $H^\infty_{-1/2}$ solution
---cfr. Lemma \ref{Lemma:EllipticRegularity}. Thus, $\theta_{AB}$ is smooth.
\end{proof}

\medskip
\noindent
\textbf{Remark.} It is worth mentioning that similar methods can be
used to obtain solutions to the approximate Killing spinor equation
equation on annular domains of the form $\mathcal{A} \equiv
\overline{\mathcal{B}_{R_2}\setminus \mathcal{B}_{R_1}}$, where
$R_2>R_1$. Again, one would use the Killing spinor candidate of
definition \ref{Definition:KSCandidate} to provide boundary value
data on the two components of $\partial \mathcal{A}$. This type of
construction is of potential relevance in the non-linear stability of
the Kerr spacetime and in the numerical evaluation of the non-Kerrness.

\section{The geometric invariant}
\label{Section:Invariant}

In this section we show how the approximate Killing spinor
$\kappa_{AB}$ obtained from Theorem \ref{Theorem:ExistenceSolution}
can be used to construct an invariant measuring the non-Kerrness of
the 3-manifold with boundary $\mathcal{S}$. To this end, we recall the
following lemma from \cite{BaeVal10a}:

\begin{lemma}
The approximate Killing spinor equation
\eqref{ApproximateKillingSpinorEquation}is the Euler-Lagrange equation
of the functional
\begin{equation}
J \equiv \int_{\mathcal{S}} \nabla_{(AB}\kappa_{CD)} \widehat{\nabla^{AB}\kappa^{CD}}\mbox{d}\mu.
\label{Functional}
\end{equation}
\end{lemma}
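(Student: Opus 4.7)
\emph{Proof proposal.} The plan is to perform a standard variational calculation: I will treat $J$ as the squared $L^{2}$-norm of the first-order operator
\[
\mathcal{D} : \kappa_{AB} \longmapsto \nabla_{(AB}\kappa_{CD)} = \xi_{ABCD},
\]
so that $J = \langle \mathcal{D}\kappa, \mathcal{D}\kappa \rangle_{L^{2}}$. Its Euler--Lagrange operator is then the composition $\mathcal{D}^{\ast}\mathcal{D}$, and the claim reduces to identifying this composition with the operator $\mathbf{L}$ displayed in \eqref{ApproximateKillingSpinorEquation}. This is exactly the content of the remark in the introduction to Section \ref{Section:SummaryNonKerrness} that $\mathbf{L}$ is obtained by composing the operator in \eqref{kspd1} with its formal adjoint.

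First I would vary $\kappa_{AB}\mapsto \kappa_{AB}+\epsilon\,\eta_{AB}$, with $\eta_{AB}=\eta_{(AB)}$ smooth and compactly supported in the interior of $\mathcal{S}$ (so that all boundary and asymptotic contributions vanish). Since $\xi_{ABCD}$ is totally symmetric, the linearisation gives
\[
\left.\tfrac{d}{d\epsilon}\right|_{\epsilon=0} J = \int_{\mathcal{S}} \left(\nabla_{AB}\eta_{CD}\right)\widehat{\xi^{ABCD}}\,\mathrm{d}\mu \;+\; \overline{(\cdots)},
\]
where in the first term the outer symmetrisation may be dropped because it is absorbed by $\xi^{ABCD}$. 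The second term is the Hermitian conjugate of the first, so vanishing of $\delta J$ for all $\eta_{AB}$ is equivalent to vanishing of the first term alone.

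Next I would integrate by parts. The relevant divergence identity is that for the Levi-Civita connection $D_{AB}$ of $h_{ab}$ (for which the volume form $\mathrm{d}\mu$ is parallel). To switch between $D_{AB}$ and the Sen connection $\nabla_{AB}$ one uses $\nabla_{AB}=D_{AB}+\tfrac{1}{2}K_{AB}{}^{\cdot}{}_{\cdot}$, and the symmetric part $\Omega_{ABCD}$ of $K_{ABCD}$ is precisely what multiplies the undifferentiated factors in $\mathbf{L}$. After discarding the boundary terms (which is legitimate because $\eta_{AB}$ has compact support in $\mathcal{S}^{\circ}$), I obtain
\[
\delta J = -2\,\mathrm{Re}\int_{\mathcal{S}} \eta_{CD}\,\widehat{\bigl(\mathcal{D}^{\ast}\mathcal{D}\kappa\bigr)^{CD}}\,\mathrm{d}\mu.
\]
Expanding $\mathcal{D}^{\ast}\mathcal{D}\kappa_{CD}=\nabla^{AB}\nabla_{(AB}\kappa_{CD)}$ plus the $\Omega$-corrections produced by moving $\nabla^{AB}$ past the outer symmetrisation (which is the same manipulation that produced the explicit $\Omega$-terms in the definition of $\mathbf{L}$), I arrive exactly at $\mathbf{L}\kappa_{CD}$ as written in \eqref{ApproximateKillingSpinorEquation}. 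Since $\eta_{AB}$ was arbitrary, the vanishing of $\delta J$ is equivalent to $\mathbf{L}\kappa_{CD}=0$.

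The main obstacle is bookkeeping rather than conceptual: one must carefully verify that the $\Omega$-dependent terms produced when (a) converting between $\nabla_{AB}$ and $D_{AB}$ so as to use the divergence theorem, and (b) commuting $\nabla^{AB}$ through the outer symmetrisation brackets, combine precisely to yield the two $\Omega^{ABF}{}_{(A}$ contractions in the stated form of $\mathbf{L}$. This is a routine but delicate spinor computation using only $\epsilon_{AB}$-algebra and the symmetries of $\Omega_{ABCD}$.
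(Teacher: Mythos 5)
Your proposal is correct and follows exactly the route the paper intends: since $\mathbf{L}$ is by construction the composition of the operator $\kappa_{CD}\mapsto\nabla_{(AB}\kappa_{CD)}$ with its formal adjoint, the lemma is the standard statement that the Euler--Lagrange operator of $\|\mathcal{D}\kappa\|^2_{L^2}$ is $\mathcal{D}^*\mathcal{D}$, with the only real work being the integration by parts against the Sen connection that produces the $\Omega$-terms. The paper itself omits the proof (the lemma is recalled from \cite{BaeVal10a}), so your variational argument is the expected one; just be sure, as you note, to actually carry out the $D_{AB}$ versus $\nabla_{AB}$ bookkeeping rather than assert it.
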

In what follows, it will be assumed that $\kappa_{AB}$ is the solution
to equation \eqref{ApproximateKillingSpinorEquation} given by
Theorem~\ref{Theorem:ExistenceSolution}. Furthermore, let
\begin{subequations}
\begin{eqnarray}
&& I_1 \equiv \int_{\mathcal{S}} \Psi_{(ABC}{}^{F}\kappa_{D)F}
\hat{\Psi}^{ABCG}\hat{\kappa}^D{}_G \mbox{d}\mu, \label{I1} \\
&& I_2 \equiv{} \int_{\mathcal{S}} \left(3\kappa_{(A}{}^{E}\nabla_{B}{}^{F}\Psi_{CD)EF}+\Psi_{(ABC}{}^{F}\xi_{D)F}\right) \nonumber \\
&& \hspace{3cm}\times \left(3\hat\kappa^{AP}\widehat{\nabla^{BQ}\Psi^{CD}{}_{PQ}}+\hat\Psi^{ABCP}\hat\xi^D{}_P\right){} \mbox{d}\mu. \label{I2}
\end{eqnarray}
\end{subequations}
The geometric invariant is then defined by
\begin{eqnarray}
I \equiv J + I_1 + I_2. \label{geometric:invariant}
\end{eqnarray}

\medskip
\noindent
\textbf{Remark.} It can be verified that $I$ is
coordinate independent. Furthermore, if the initial data set satisfies the 
decay conditions \eqref{Decay1}-\eqref{Decay2}, then $I$ is finite.

\medskip
The desired characterisation of Kerr data on 3-manifolds $\mathcal{S}$
with boundary and one asymptotic end is given by the following
theorem.

\begin{theorem}
\label{Theorem:CharacterisationInitialData}
Let $(\mathcal{S},h_{ab},K_{ab})$ be an initial data set for the
Einstein vacuum field equations such that $\mathcal{S}$ is a manifold
with boundary $\partial \mathcal{S}\approx \Sphere^2$ satisfying
Assumption \ref{TechnicalAssumption}.  Furthermore, assume that
$\mathcal{S}$ has only one asymptotic end, that the asymptotic
conditions \eqref{Decay1}-\eqref{Decay2} are satisfied with $m\neq 0$.
Let $I$ be the invariant defined by equations \eqref{Functional},
\eqref{I1}, \eqref{I2} and \eqref{geometric:invariant}, where
$\kappa_{AB}$ is given as the only solution to equation
\eqref{ApproximateKillingSpinorEquation} with asymptotic behaviour
given by \eqref{LeadingBehaviour} and with boundary value at $\partial
\mathcal{S}$ given by the $n_{AB}$-Killings spinor candidate
$\breve{\kappa}'_{AB}$ of equation \eqref{CandidateALT} where $n_{AB}$
is the outward pointing normal to $\partial\mathcal{S}$. The invariant
$I$ vanishes if and only if $(\mathcal{S},h_{ab},K_{ab})$ is an
initial data set for the Kerr spacetime.
\end{theorem}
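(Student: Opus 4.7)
The plan is to prove the equivalence by exploiting that $J$, $I_1$, $I_2$ are each manifestly non-negative integrals of the form $\int (\cdot)\widehat{(\cdot)}\,\mathrm{d}\mu$, so that $I=0$ forces each of them (and hence each integrand) to vanish pointwise; these pointwise vanishings are precisely the space-spinor Killing spinor data equations \eqref{kspd1}--\eqref{kspd3}.

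For the reverse direction ($I=0 \Rightarrow$ Kerr), from $J=0$ I would read off $\xi_{ABCD}=\nabla_{(AB}\kappa_{CD)}=0$, i.e.\ \eqref{kspd1}; from $I_1=0$, equation \eqref{kspd2}; and from $I_2=0$, equation \eqref{kspd3}. Applying the Proposition of Section \ref{Section:SummaryNonKerrness}, the full Killing spinor equation \eqref{KillingSpinorEquation} admits a solution on the development $\mathcal{D}(\mathcal{S})$. Combined with the asymptotic flatness hypotheses \eqref{Decay1}--\eqref{Decay2} and $m\neq 0$, I would then invoke the refined Killing spinor characterisation of Kerr that the authors prove in Appendix B (which removes the technical assumptions of \cite{BaeVal10a,BaeVal10b}) to conclude that $\mathcal{D}(\mathcal{S})$ is isometric to a piece of the Kerr spacetime, so $(\mathcal{S},h_{ab},K_{ab})$ is Kerr initial data.

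For the forward direction (Kerr $\Rightarrow I=0$), I would take the genuine Killing spinor $\kappa^{\mathrm{K}}_{AB}$ of the Kerr spacetime, restrict it to $\mathcal{S}$, and show that it coincides with the unique solution $\kappa_{AB}$ supplied by Theorem \ref{Theorem:ExistenceSolution}. Since $\nabla_{(AB}\kappa^{\mathrm{K}}_{CD)}=0$, in particular $\mathbf{L}\kappa^{\mathrm{K}}_{CD}=0$. To apply the uniqueness clause of Theorem \ref{Theorem:ExistenceSolution} I must check two matching conditions: (i) that $\kappa^{\mathrm{K}}_{AB}$ has the prescribed asymptotic form \eqref{Asymptotic:kappa}, which is a direct computation in the asymptotically Cartesian frame using $\Psi_{\mathbf{ABCD}}=3mx_{(\mathbf{AB}}x_{\mathbf{CD})}/r^{5}+o_\infty(r^{-7/2})$ together with the explicit Kerr Killing spinor; and (ii) that $\kappa^{\mathrm{K}}_{AB}\big|_{\partial\mathcal{S}}=\breve{\kappa}'_{AB}$. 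Condition (ii) rests on the proposition of Section \ref{Section:KSCandidate}: on the Petrov type D region comprising all of Kerr, the $\zeta_{AB}$-Killing spinor candidate of Definition \ref{Definition:KSCandidate} equals the true Killing spinor for every admissible $\zeta_{AB}$. Choosing $\zeta_{AB}=n_{AB}$ (well-defined by Assumption \ref{TechnicalAssumption}) and absorbing the dimensional constant of formula \eqref{CandidateALT} fixes the match. With uniqueness giving $\kappa_{AB}=\kappa^{\mathrm{K}}_{AB}$ throughout $\mathcal{S}$, equations \eqref{kspd1}--\eqref{kspd3} hold pointwise, so each of $J$, $I_1$, $I_2$, and hence $I$, vanishes.

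The main obstacle is the forward direction, specifically tracking normalisations and branch choices so that the Kerr Killing spinor literally equals $\breve{\kappa}'_{AB}$ on $\partial\mathcal{S}$ and has leading asymptotic coefficient exactly $-\tfrac{\sqrt 2}{3}(1+2m/r)x_{\mathbf{AB}}$. The overall factor $-2^{5/6}m^{1/3}/3^{1/6}$ inserted in \eqref{CandidateALT} and the branch prescription of Remark 2 following Definition \ref{Definition:KSCandidate} are precisely designed for this; verifying them amounts to computing the leading asymptotic behaviour of $\psi^{-1/3}\Xi^{-1/2}(\psi^{-1}\Psi_{ABPQ}n^{PQ}-\tfrac{1}{6}n_{AB})$ in Kerr and matching term by term. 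Once this bookkeeping is done, the rest of the argument is structural, relying only on Theorem \ref{Theorem:ExistenceSolution}, the Proposition of Section \ref{Section:SummaryNonKerrness}, and the refined characterisation theorem in Appendix B.
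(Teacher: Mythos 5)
The paper itself does not spell out a proof of this theorem: it states that the argument is ``analogous to the one given in \cite{BaeVal10a,BaeVal10b}'' and omits it. Your reconstruction follows exactly the strategy of those references adapted to the boundary-value setting: positivity of $J$, $I_1$, $I_2$ forces the pointwise Killing spinor data equations \eqref{kspd1}--\eqref{kspd3}, the propagation Proposition yields a Killing spinor on the development, and the converse direction identifies the restriction of the Kerr Killing spinor with the unique solution of Theorem \ref{Theorem:ExistenceSolution} via the asymptotic and boundary matching. The forward-direction bookkeeping you flag (the factor $-2^{5/6}m^{1/3}/3^{1/6}$ and the branch choice) is indeed the point of Remark 2 after equation \eqref{CandidateALT}, so that part of your outline is sound.

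There is, however, one genuine gap in the reverse direction. Theorem \ref{Theorem:CharacterisationKerrData} of Appendix B requires as hypothesis (i) not merely the existence of a Killing spinor but that the associated Killing vector $\xi_{AA'}=\nabla^{B}{}_{A'}\kappa_{AB}$ be \emph{real} and, in (ii), that it tend to a time translation at the asymptotic end. Neither is automatic: for a generic vacuum type D spacetime with a Killing spinor, $\xi_{AA'}$ is complex (it encodes two real Killing vectors), and reality singles out the generalised Kerr--NUT class. You cannot simply ``combine with the asymptotic flatness hypotheses and invoke Appendix B''; you must first show that the specific asymptotic normalisation \eqref{Asymptotic:kappa} of $\kappa_{AB}$ forces $\xi_{AA'}$ to tend to a real time translation at infinity, and then argue that the imaginary part of $\xi_{AA'}$ --- which is itself a Killing vector of the vacuum development --- decays at infinity and therefore vanishes identically (a standard but non-trivial rigidity statement for Killing vectors on asymptotically flat spacetimes, used explicitly in \cite{BaeVal10b}). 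Without this step the hypotheses of the characterisation theorem are not verified and the chain of implications breaks. A second, more minor, point: the propagation Proposition gives the Killing spinor only on the domain of dependence of $\mathcal{S}$, so one should check that this domain contains a stationary asymptotically flat 4-end on which Mars-type arguments apply; this is true here because $\mathcal{S}$ is asymptotically Euclidean, but it deserves a sentence.
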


The proof of this result is analogous to the one given in
\cite{BaeVal10a, BaeVal10b} and will be omitted.

\medskip
\noindent
\textbf{Remark.} In the previous theorem, for an initial data set for
the Kerr spacetime it will be understood that
$\mathcal{D}(\mathcal{S})$ (the union of the past and future domains
of dependence of $\mathcal{S}$) is isometric to a portion of the Kerr
spacetime. In order to make stronger assertions about
$\mathcal{D}(\mathcal{S})$, one needs to provide more information
about $\partial \mathcal{S}$. For example, if it can be asserted that
$\partial \mathcal{S}$ coincides with the intersection of the past and
future components of a non-expanding horizon, then as a consequence of
Theorem \ref{Theorem:CharacterisationInitialData} will give that
$\mathcal{D}(\mathcal{S})$ is the domain of outer communication of the
Kerr spacetime.

\section{Conclusions and outlook}
\label{Section:Conclusions}
Theorem \ref{Theorem:CharacterisationInitialData} and the methods
developed in the present articles are expected to be of relevance in
several outstanding problems concerning the Kerr spacetime: a proof of the
uniqueness of stationary black holes which does not assume analyticity
of the horizon, and whether the Kerr solution can describe the
exterior of a rotating star. The boundary value problem discussed in
the present article will also play a role in applications of Killing
spinor methods to the non-linear stability of the Kerr spacetime and
in the evaluation of the non-Kerrness in slices of numerically computed
black hole spacetimes.

\medskip
For the problem of the uniqueness of stationary black holes, as
mentioned in the remark after Theorem 
\ref{Theorem:CharacterisationInitialData}, one would like to consider
slices in the domain of outer communication of a stationary black hole
that intersect the intersection of the two components of the
non-expanding horizon. One then would have to analyse the consequences
that the existence of this type of boundary has on the Killing spinor
candidate constructed out of the normal to $\partial \mathcal{S}$
---the Weyl tensor is known to be of type D on non-expanding horizons
\cite{IonKla09a}. The main challenge in this approach is to find a
convenient way of relating the \emph{a priori} assumption about
stationarity made in the problem of uniqueness of black holes with the
\emph{Killing vector initial data candidates} $\xi$, $\xi_{AB}$
provided by the solution, $\kappa_{AB}$, to the approximate Killing
spinor equation \eqref{ApproximateKillingSpinorEquation}.

\medskip
With regards to the problem of the existence of an interior solution
for the Kerr spacetime, the key question to be analysed is what kind
of conditions on the boundary $\partial \mathcal{S}$ need to be
prescribed to ensure that the solution to the approximate Killing
spinor equation \eqref{ApproximateKillingSpinorEquation} given by
\eqref{Theorem:ExistenceSolution} renders a vanishing invariant $I$. It
is to be expected that these conditions will impose strong
restrictions to the type of matter models describing an hypothetical
interior solution.

\medskip
The issues touched upon in the previous paragraphs will be discussed
in future works.

\section*{Acknowledgements}
We would like to thank Gast\'on \'Avila for helpful conversations on
the boundary value problem for elliptic systems. TB is funded by a
scholarship of the Wenner-Gren foundations. JAVK was funded by an EPSRC
Advanced Research fellowship. The authors thank the hospitality and
financial support of the International Centre of Mathematical Sciences
(ICMS) and the Centre for Analysis and Partial Differential Equations
(CANPDE) of the University of Edinburgh for their hospitality during
the workshop on Mathematical Relativity September 1st-8th, 2010, in
the course of which 
this research was completed.

\appendix

\section{Elliptic results for slices in the domain of outer communication of a black hole}
\label{Appendix:Elliptic}
In this appendix we summarise the results on the theory of boundary
value problems for elliptic systems that have been used in the present
article. The presentation is adapted from \cite{LocMcO85}.

\medskip
As in the main text, let $\mathcal{S}$ denote a 3-dimensional manifold
with the topology of $\Real^3 \setminus \mathcal{B}_1$, where
$\mathcal{B}_1$ denotes the open ball of radius $1$. Note that
$\mathcal{S}$ is closed. Assume $\partial \mathcal{S}\approx
\Sphere^2$ to be $C^\infty$.  In what follows, let $u$ denote a
$N$-dimensional vector valued function over $\mathcal{S}$.  Following
\cite{Can81,Loc81}, a second order elliptic operator $\mathbf{L}$
acting on $u$ will be said to be asymptotically homogeneous if it
can be written in the form
\[
\mathbf{L} u(x) = (a_\infty^{ij}+ a^{ij}(x)) D_i D_j u(x) + a^i(x) D_i
u(x) + a(x) u(x), \quad x\in \mathcal{S} 
\]
where $a_\infty^{ij}$ denotes a matrix with constant coefficients
while $a^{ij}$, $a^{ij}$, $a$ are matrix valued functions of the
coordinates such that
\[
a^{ij} \in H^\infty_{-1/2}(\mathcal{S}), \quad a^j \in H^\infty_{-3/2}(\mathcal{S}), \quad a\in H^\infty_{-5/2}(\mathcal{S}).
\]

\medskip
On $\partial \mathcal{S}$ we will consider the homogeneous Dirichlet
operator $\mathbf{B}$ given by
\[
\mathbf{B}u(y) =u(y), \quad y\in \partial \mathcal{S}.
\]
The combined operator $(\mathbf{L},\mathbf{B})$ is said to be
\emph{L-elliptic} if $\mathbf{L}$ is elliptic on
$\mathcal{S}$ and $(\mathbf{L},\mathbf{B})$
satisfies the Lopatinski-Shapiro compatibility conditions ---see
\cite{WloRowLaw95} for detailed definitions. Crucial for our purposes
is that if $\mathbf{L}$ is elliptic and $\mathbf{B}$ is the Dirichlet
boundary operator, then the Lopatinski-Shapiro conditions are satisfied
and thus $(\mathbf{L},\mathbf{B})$ is L-elliptic ---see again
\cite{WloRowLaw95}, Theorem 10.7.

\medskip
The Fredholm properties for the combined operator
$(\mathbf{L},\mathbf{B})$ follow from Theorem 6.3 in \cite{LocMcO85}
---cfr. similar results in \cite{Kle99,Reu89}. Bartnik's conventions
are used for the weights of the Sobolev spaces $H^s_{\delta}$ ---see
\cite{Bar86}.

\begin{theorem}
\label{Fredholm:properties}
Let $\mathbf{L}$ denote a smooth second order asymptotically
homogeneous operator on $\mathcal{S}\approx \Real^3 \setminus
\mathcal{B}_1$. Furthermore, let $\partial \mathcal{S}$ be smooth and
let $\mathbf{B}$ denote the Dirichlet boundary operator. Then for
$\delta<0$, $s\geq 2$ the map
\[
(\mathbf{L},\mathbf{B}): H^s_\delta(\mathcal{S}) \rightarrow H^{s-2}_{\delta-2}(\mathcal{S}) \times H^{s-1/2}(\partial \mathcal{S}) 
\]
is Fredholm.
\end{theorem}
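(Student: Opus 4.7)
The plan is to follow the blueprint of Theorem~6.3 in \cite{LocMcO85}, patching together local elliptic estimates near $\partial\mathcal{S}$, in a bounded interior region, and in the asymptotic end via a partition of unity, and then upgrading a semi-Fredholm estimate to the full Fredholm property through Rellich compactness and a duality argument. Concretely, I would introduce a smooth partition of unity $\{\chi_1,\chi_2,\chi_3\}$ subordinate to a cover by a collar $U_1\approx\Sphere^2\times[0,1)$ of $\partial\mathcal{S}$, a bounded interior set $U_2$, and the asymptotic region $U_3=\{r>R_0\}$ for $R_0$ large, and analyse the localised pieces $\chi_\alpha u$ separately.

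On $U_1$ the relevant tool is the standard boundary regularity estimate for second-order Dirichlet problems, which applies because Dirichlet conditions on an elliptic system automatically satisfy the Lopatinski--Shapiro condition (\cite{WloRowLaw95}, Theorem~10.7); on $U_2$ one uses the usual interior $L^2$ elliptic estimate. On $U_3$ the essential observation is that, by the definition of ``asymptotically homogeneous'', the coefficients of $\mathbf{L}$ differ from the constant matrix $a^{ij}_\infty$ by a term in $H^\infty_{-1/2}$ (with analogous decay for the lower order coefficients), so for $R_0$ large the operator is a perturbation of the model
\[
\mathbf{L}_\infty = a^{ij}_\infty D_i D_j.
\]
A Mellin transform in the radial variable followed by spherical harmonic decomposition (equivalently, the scaling analysis carried out in \cite{Can81,Loc81}) shows that $\mathbf{L}_\infty:H^s_\delta(\Real^3\setminus\mathcal{B}_{R_0})\to H^{s-2}_{\delta-2}$ admits a weighted semi-Fredholm estimate whenever $\delta$ avoids the discrete set of indicial roots of $\mathbf{L}_\infty$; in the Bartnik convention of \cite{Bar86} this set is a subset of the integers, and the hypothesis $\delta<0$ (together with non-exceptionality, which is the implicit content of the statement) keeps us in the admissible regime.

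Combining the three local estimates, and controlling the commutators $[\mathbf{L},\chi_\alpha]$ (which are first order, lose only a single power of the weight, and are supported in compact transition regions so they feed back into a compact interior term), yields the global semi-Fredholm estimate
\[
\|u\|_{H^s_\delta(\mathcal{S})} \leq C\bigl(\|\mathbf{L}u\|_{H^{s-2}_{\delta-2}(\mathcal{S})} + \|\mathbf{B}u\|_{H^{s-1/2}(\partial\mathcal{S})} + \|u\|_{L^2(K)}\bigr),
\]
for some compact $K\subset\mathcal{S}$. Since the embedding $H^s_\delta(\mathcal{S})\hookrightarrow L^2(K)$ is compact by Rellich's theorem, Peetre's lemma gives finite-dimensional kernel and closed range. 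The cokernel is finite-dimensional by the same argument applied to the formal adjoint $(\mathbf{L}^*,\mathbf{B}^*)$, which is again an asymptotically homogeneous second-order operator with Dirichlet boundary data on the dual weight $-3-\delta$.

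The main obstacle is the asymptotic analysis on $U_3$: one must turn the formal approximation $\mathbf{L}\approx\mathbf{L}_\infty$ into a quantitative weighted estimate, which requires diagonalising the homogeneous model via Mellin transform, locating its indicial roots, and verifying that the perturbation can genuinely be absorbed in the relevant weighted norms despite not being compact in $H^s_\delta$. Once that model analysis is in place, the rest of the argument is a standard assembly carried out in detail in \cite{LocMcO85} and adapted to boundary value problems in \cite{Kle99,Reu89}.
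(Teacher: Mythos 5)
Your proposal takes essentially the same route as the paper, which in fact offers no proof at all: it simply invokes Theorem~6.3 of \cite{LocMcO85} (with \cite{Kle99,Reu89} as parallel references). What you have written is a faithful reconstruction of the proof of that cited result ---partition of unity adapted to the inner boundary, the interior and the asymptotic end; Lopatinski--Shapiro/ADN estimates at $\partial\mathcal{S}$; Mellin analysis of the constant-coefficient model at infinity with absorption of the decaying perturbation; semi-Fredholm estimate plus Rellich compactness and Peetre's lemma; duality for the cokernel. Your observation that the weight must additionally avoid the discrete set of indicial roots is correct and is indeed the implicit content of the citation: Lockhart--McOwen's theorem carries exactly that hypothesis, and the statement as printed, for arbitrary $\delta<0$, would fail at the exceptional (e.g.\ negative integer) weights.

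One slip should be fixed: the dual weight in your cokernel argument is wrong. In Bartnik's conventions for $n=3$, the $L^2$ pairing identifies the dual of $H^0_{\tau}$ with $H^0_{-3-\tau}$, and the space being dualised is the \emph{target} $H^{s-2}_{\delta-2}$, of weight $\tau=\delta-2$; hence the annihilator of the range ---equivalently the kernel of the adjoint problem--- lives at weight $-3-(\delta-2)=-1-\delta$, not $-3-\delta$. This is confirmed by the paper's own Proposition~\ref{Fredholm:Alternative}, where the adjoint kernel elements are taken in $H^0_{-1-\delta}$, and by its use in Theorem~\ref{Theorem:ExistenceSolution}, where $\delta=-1/2$ yields adjoint solutions in $H^0_{-1/2}$. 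With the weight $-3-\delta$ you would be testing against a strictly smaller space of adjoint solutions; finite-dimensionality of the cokernel would survive, but the identification of the cokernel ---and hence the solvability criterion one extracts from it downstream--- would be incorrect.
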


The same arguments used in Theorem 6.3 in \cite{Can81} then allow to prove
the following version of the Fredholm alternative:

\begin{proposition}
\label{Fredholm:Alternative}
Let $(\mathbf{L},\mathbf{B})$ as in theorem
\ref{Fredholm:properties}.  Given $\delta<0$, the boundary value problem
\begin{eqnarray*}
&& \mathbf{L} u(x) = f(x), \quad f\in H^0_{\delta-2}(\mathcal{S}) \quad x\in \mathcal{S}, \\ 
&&  u(y) = g(y), \quad g\in H^0(\partial\mathcal{S}), \quad y \in \partial \mathcal{S}
\end{eqnarray*}
 has a solution $u\in H^2_\delta(\mathcal{S})$ if
\[
\int_{\bar{\mathcal{S}}} f \cdot v \;\mbox{\emph d}\mu =0,
\]
for all $v\in H^0_{-1-\delta}(\mathcal{S})$ such that 
\begin{eqnarray*}
&& \mathbf{L}^* v(x)=0, \quad x\in \mathcal{S}, \\
&& v(y)=0, \quad y \in \partial\mathcal{S},
\end{eqnarray*}
where $\mathbf{L}^*$ denotes the formal adjoint of $\mathbf{L}$.
\end{proposition}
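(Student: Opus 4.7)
The plan is to deduce Proposition \ref{Fredholm:Alternative} from Theorem \ref{Fredholm:properties} by the standard Fredholm-alternative argument for closed-range operators, adapted to weighted Sobolev spaces with a boundary in the style of \cite{Can81}. First I would apply Theorem \ref{Fredholm:properties} with $s=2$ to the map
\[
(\mathbf{L},\mathbf{B}): H^2_\delta(\mathcal{S}) \longrightarrow H^0_{\delta-2}(\mathcal{S}) \times H^{3/2}(\partial\mathcal{S}),
\]
obtaining that it is Fredholm; in particular its range is closed and coincides with the annihilator (under the natural pairing) of the cokernel. Because $g \in H^0(\partial\mathcal{S}) \hookrightarrow H^{-3/2}(\partial\mathcal{S})^*$, it suffices to work with the larger target $H^0_{\delta-2}(\mathcal{S})\times H^0(\partial\mathcal{S})$ and use density together with the closed-range property to reduce solvability to a finite list of orthogonality conditions.

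Next I would identify the cokernel. Under Bartnik's weighted-$L^2$ conventions, the $L^2$-pairing identifies the dual of $H^0_{\delta-2}(\mathcal{S})$ with $H^0_{-1-\delta}(\mathcal{S})$, since for weights in dimension $n=3$ one has $\delta^*=-3-\delta$, and $-3-(\delta-2)=-1-\delta$. The cokernel is then represented by pairs $(v,w)$ with $v\in H^0_{-1-\delta}(\mathcal{S})$ and $w$ in a suitable boundary dual, satisfying
\[
\int_{\mathcal{S}} \mathbf{L}u\cdot v\,\mathrm{d}\mu + \langle u|_{\partial\mathcal{S}}, w\rangle_{\partial\mathcal{S}} = 0
\quad\text{for all } u\in H^2_\delta(\mathcal{S}).
\]
Testing first against $u$ with compact support in the interior gives $\mathbf{L}^* v=0$ distributionally on $\mathcal{S}$; testing against general $u$ with prescribed trace then produces the boundary relation $v|_{\partial\mathcal{S}}=0$ via integration by parts, which kills the $w$-pairing and forces $w$ to be determined by the normal derivative of $v$. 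Consequently the solvability condition collapses to
\[
\int_{\mathcal{S}} f\cdot v\,\mathrm{d}\mu = 0
\]
for every $v\in H^0_{-1-\delta}(\mathcal{S})$ with $\mathbf{L}^* v=0$ and $v|_{\partial\mathcal{S}}=0$, which is exactly the statement of the proposition (note that the $g$-data imposes no compatibility condition because every cokernel element vanishes on the boundary).

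The main obstacle I anticipate is the rigorous justification of the adjoint boundary condition $v|_{\partial\mathcal{S}}=0$: a priori the cokernel representative $v$ is only a weighted $L^2$ distribution, so one must upgrade its regularity before one can speak of a genuine Dirichlet trace. This is handled by interior and boundary elliptic regularity for $\mathbf{L}^*$ (the formally self-adjoint character discussed earlier ensures the same L-ellipticity holds), which promotes $v$ into a classical solution of the homogeneous adjoint problem; the Lopatinski-Shapiro conditions for Dirichlet data, already invoked in the text for $(\mathbf{L},\mathbf{B})$, apply equally to $(\mathbf{L}^*,\mathbf{B})$. A second subtlety is confirming that the weight arithmetic matches Bartnik's definitions exactly; this is a direct check but must be done carefully to ensure the pairing $\int f\cdot v\,\mathrm{d}\mu$ is absolutely convergent precisely when $f\in H^0_{\delta-2}$ and $v\in H^0_{-1-\delta}$. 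Once these two points are settled, the conclusion follows by invoking the argument of Theorem 6.3 in \cite{Can81} essentially verbatim.
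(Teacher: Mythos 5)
Your overall strategy is the intended one: the paper gives no details beyond an appeal to the argument of Theorem 6.3 of \cite{Can81}, and your reconstruction (Fredholm property from Theorem \ref{Fredholm:properties}, closed range, identification of the dual of $H^0_{\delta-2}(\mathcal{S})$ with $H^0_{-1-\delta}(\mathcal{S})$ via the weight arithmetic $-3-(\delta-2)=-1-\delta$ in dimension $3$, and elliptic regularity plus the Lopatinski--Shapiro conditions to make sense of the adjoint Dirichlet kernel) is the standard route and the bookkeeping is right.

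There is, however, a genuine gap at the final reduction. You correctly observe that the boundary component $w$ of a cokernel element is essentially the conormal derivative of $v$ on $\partial\mathcal{S}$, but you then conclude that the solvability condition ``collapses to $\int f\cdot v\,\mathrm{d}\mu=0$'' and that ``the $g$-data imposes no compatibility condition because every cokernel element vanishes on the boundary.'' This inference does not hold: Green's identity yields the compatibility condition
\[
\int_{\mathcal{S}} f\cdot v\,\mathrm{d}\mu \;+\; \langle g,\partial_{\nu}v\rangle_{\partial\mathcal{S}} \;=\;0
\]
(up to sign and normalisation conventions), and $v|_{\partial\mathcal{S}}=0$ does not make $\partial_\nu v$ vanish --- if it did, unique continuation for $\mathbf{L}^*$ would force $v\equiv 0$ and the cokernel would be trivial anyway. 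A concrete check: take $\mathbf{L}=\Delta$ on $\Real^3\setminus\mathcal{B}_1$ and $\delta\in(-2,-1)$, which satisfies $\delta<0$; then $v=1-1/r$ lies in $H^0_{-1-\delta}$, is harmonic and vanishes on $\partial\mathcal{S}$, and the data $f=0$, $g=1$ satisfies $\int f\cdot v\,\mathrm{d}\mu=0$, yet the decaying solution $u=1/r$ fails to lie in $H^2_\delta$. So orthogonality of $f$ to the adjoint Dirichlet kernel alone does not suffice when $g\neq 0$; the boundary term $\langle g,\partial_\nu v\rangle$ must be retained (equivalently, one first subtracts an extension of $g$ and keeps the resulting contribution to the right-hand side). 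The application in Theorem \ref{Theorem:ExistenceSolution} is unaffected, because there the adjoint kernel $\{v:\mathbf{L}^*v=0,\ v|_{\partial\mathcal{S}}=0,\ v\in H^0_{-1-\delta}\}$ is shown to be trivial and both compatibility terms are empty; but as a proof of the proposition as stated for arbitrary $g$, your argument needs either the extra boundary term or the restriction $g=0$.
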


Finally, we note the following lemma ---cfr. equation (1.13) in
\cite{LocMcO85}.

\begin{lemma}
\label{Lemma:EllipticRegularity}
Let $(\mathbf{L},\mathbf{B})$ as in theorem
\ref{Fredholm:properties}. Then for any $\delta\in \Real$ and any
$s\geq 2$, there exists a constant $C$ such that for every
$u\in H^s_{loc}\cap H^0_\delta(\mathcal{S})$, the following inequality
holds
\[
||u||_{H^s_{\delta}(\mathcal{S})} \leq C\left( ||\mathbf{L}u||_{H^{s-2}_{\delta-2}(\mathcal{S})} + ||\mathbf{B}u||_{H^{s-1/2}(\partial\mathcal{S})} + ||u||_{H^{s-2}_\delta(\mathcal{S})} \right).
\]
\end{lemma}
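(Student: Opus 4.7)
The plan is to establish this weighted a priori estimate by localising the problem into three regions and applying a combination of classical elliptic regularity with a dyadic scaling argument at infinity. Concretely, I would fix a smooth partition of unity $\{\chi_\partial,\chi_{\mathrm{int}},\chi_\infty\}$ subordinate to a cover of $\mathcal{S}$ consisting of (i) a collar neighbourhood $U_\partial$ of $\partial\mathcal{S}$, (ii) a relatively compact set $U_{\mathrm{int}}\subset\mathcal{S}\setminus\partial\mathcal{S}$ bounded away from the asymptotic end, and (iii) the asymptotic region $U_\infty=\{r>R\}$ for some $R$ large enough that the decaying coefficients $a^{ij}, a^i, a$ of $\mathbf{L}$ are sufficiently small relative to $a_\infty^{ij}$. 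Since the weighted norm and the unweighted norm are equivalent on any compact piece of $\mathcal{S}$, the desired estimate reduces to controlling each term $\chi_\ast u$ separately.

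For $\chi_{\mathrm{int}}u$, I would invoke the standard interior Agmon--Douglis--Nirenberg estimate for the elliptic operator $\mathbf{L}$: on the relatively compact set $\mathrm{supp}(\chi_{\mathrm{int}})$ one gets $\|\chi_{\mathrm{int}}u\|_{H^s}\le C(\|\mathbf{L}u\|_{H^{s-2}}+\|u\|_{H^{s-2}})$, where the $\chi_{\mathrm{int}}$ commutator terms with $\mathbf{L}$ are absorbed into the lower-order $H^{s-2}$ term. For $\chi_\partial u$, I would use the classical boundary elliptic regularity estimate of the Agmon--Douglis--Nirenberg/Lions--Magenes type for L-elliptic Dirichlet systems (as stated, for instance, in Wloka--Rowley--Lawruk, on which the authors rely for Lopatinski--Shapiro); the compatibility conditions are satisfied because $\mathbf{B}$ is the Dirichlet operator for an elliptic $\mathbf{L}$, which yields $\|\chi_\partial u\|_{H^s}\le C(\|\mathbf{L}u\|_{H^{s-2}}+\|\mathbf{B}u\|_{H^{s-1/2}(\partial\mathcal{S})}+\|u\|_{H^{s-2}})$ on the collar.

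The hard part is $\chi_\infty u$, where the weight $\delta$ must be respected. The standard device (going back to Nirenberg--Walker, Cantor, Bartnik, Lockhart--McOwen) is a dyadic rescaling: cover $U_\infty$ by annuli $A_k=\{2^k\le r\le 2^{k+2}\}$ for $k\ge k_0$, and on each $A_k$ introduce the rescaled variable $y=2^{-k}x$, so that $A_k$ maps to the fixed annulus $A_*=\{1\le|y|\le 4\}$. The rescaled function $u_k(y)\equiv u(2^k y)$ satisfies an elliptic equation $\mathbf{L}^{(k)}u_k = 2^{2k}(\mathbf{L}u)(2^k\cdot)$, where $\mathbf{L}^{(k)}$ converges in $C^\infty(A_*)$ to the constant-coefficient operator $a_\infty^{ij}D_iD_j$ as $k\to\infty$ because the coefficients $a^{ij}, a^i, a$ lie in weighted spaces with negative weights. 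For $k$ large enough one therefore has a uniform interior estimate
\[
\|u_k\|_{H^s(A_*)}\le C\bigl(\|\mathbf{L}^{(k)}u_k\|_{H^{s-2}(A_*)}+\|u_k\|_{H^{s-2}(A_*)}\bigr),
\]
with $C$ independent of $k$. Translating back via the scaling rules for the weighted norms ($\|\cdot\|_{H^s_\delta(A_k)}$ scales like $2^{k(3/2-\delta-|\alpha|)}\|\cdot\|_{H^s(A_*)}$ for derivatives of order $|\alpha|$) and squaring and summing over $k\ge k_0$ yields
\[
\|\chi_\infty u\|_{H^s_\delta(\mathcal{S})}^2 \le C\bigl(\|\mathbf{L}u\|_{H^{s-2}_{\delta-2}(\mathcal{S})}^2+\|u\|_{H^{s-2}_\delta(\mathcal{S})}^2\bigr),
\]
using the fact that the commutators $[\mathbf{L},\chi_\infty]$ are supported in a finite collection of annuli and that the weight exponents on both sides match precisely because differentiation decreases the Bartnik weight index by one. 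The main obstacle is verifying that the constant $C$ is genuinely uniform in $k$, which requires the asymptotic homogeneity of $\mathbf{L}$ together with an open-mapping/continuity argument for the frozen-coefficient problem on $A_*$; this is exactly where the hypothesis $a^{ij}\in H^\infty_{-1/2}$, $a^i\in H^\infty_{-3/2}$, $a\in H^\infty_{-5/2}$ is used.

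Finally, I would combine the three local estimates, noting that the overlap regions between the $\chi$'s are compact and contribute bounded commutator terms that can be absorbed into the lower-order $\|u\|_{H^{s-2}_\delta}$ summand. Summing gives the stated inequality with a single constant $C=C(\delta,s,\mathbf{L},\mathcal{S})$.
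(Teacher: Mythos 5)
The paper does not actually prove this lemma: it is quoted verbatim from Lockhart--McOwen \cite{LocMcO85} (cf.\ their equation (1.13)), with no argument given beyond the citation. So there is no internal proof to compare yours against; what you have written is a reconstruction of the standard argument behind that citation, and it is the right one. Your three-region decomposition (boundary collar with ADN/Lopatinski--Shapiro estimates, compact interior with classical interior estimates, and the dyadic-annulus rescaling at infinity) is precisely the technique by which such weighted estimates are established in the literature the paper leans on (Nirenberg--Walker, Cantor \cite{Can81}, Bartnik \cite{Bar86}, Lockhart--McOwen \cite{LocMcO85}). The key mechanism is also correctly identified: asymptotic homogeneity makes the rescaled operators $\mathbf{L}^{(k)}$ converge to $a_\infty^{ij}D_iD_j$ on the model annulus (indeed $a^{ij}(2^k y)=o(2^{-k/2})$, $2^k a^i(2^k y)=o(2^{-k/2})$, $2^{2k}a(2^k y)=o(2^{-k/2})$, which is exactly where the weights $-1/2,-3/2,-5/2$ enter), yielding constants uniform in $k$, and the Bartnik weights are engineered so that the factors on the two sides of the rescaled estimate cancel. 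What your proof buys is self-containedness; what the paper's bare citation buys is that Lockhart--McOwen state the estimate for exactly this class of operators and spaces, so nothing beyond asymptotic homogeneity needs to be verified.

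Two details need tightening, though neither is a genuine gap. First, the commutator $[\mathbf{L},\chi]$ is a \emph{first-order} operator, so $\|[\mathbf{L},\chi]u\|_{H^{s-2}}$ is controlled by $\|u\|_{H^{s-1}}$ on the overlap, not by $\|u\|_{H^{s-2}}$; to absorb it you need the interpolation (Ehrling) inequality $\|u\|_{H^{s-1}}\leq \epsilon\|u\|_{H^s}+C_\epsilon\|u\|_{H^{s-2}}$ and to move the $\epsilon$-term to the left-hand side. Second, your stated scaling rule with the factor $2^{k(3/2-\delta-|\alpha|)}$ per derivative order is not correct as written; in Bartnik's conventions the weight compensates the derivative count, so that uniformly in $\alpha$ one has $\|u\|_{H^s_\delta(A_k)}\approx 2^{-k\delta}\|u_k\|_{H^s(A_*)}$ and likewise $\|\mathbf{L}u\|_{H^{s-2}_{\delta-2}(A_k)}\approx 2^{-k\delta}\|\mathbf{L}^{(k)}u_k\|_{H^{s-2}(A_*)}$, which is what makes the summation over $k$ close up. Your conclusion that ``the weight exponents on both sides match precisely'' is correct; only the intermediate exponent is garbled.
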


In this lemma, $H^s_{loc}$ denotes the local Sobolev space. That is,
$u\in H^s_{loc}$ if for an arbitrary smooth function $v$ with compact
support, $uv\in H^s$.

\medskip
\noindent
\textbf{Remark.} If $\mathbf{L}$ has smooth coefficients and
$\mathbf{L}u=0$, then it follows that all the
$H^s_{\delta}(\mathcal{S})$ norms of $u$ are bounded by the
$H^0_\delta(\mathcal{S})$ and the $H^{s-1/2}(\partial \mathcal{S})$
norms. Thus, it follows that if a solution to the boundary value
problem exists and the boundary data is smooth, then the solution must
be, in fact, smooth ---elliptic regularity.

\section{An improved characterisation of the Kerr spacetime by means Killing spinors}

In \cite{BaeVal10b} a characterisation of the Kerr spacetime by means
of Killing spinors was given. This characterisation contains an 
\emph{a priori} assumption on the Weyl tensor ---namely, that it is
nowhere of type N or D. The purpose
of the present appendix is to show that these assumptions can be
removed.

\medskip
As in the main text, let $\kappa_{AB}$ denote a totally symmetric spinor. Let
\[
\xi_{AA'} \equiv \nabla^B{}_{A'} \kappa_{AB}.
\]
If $\kappa_{AB}$ is a solution to the Killing spinor equation
\eqref{KillingSpinorEquation}, then $\xi_{AA'}$ satisfies the Killing equation
\[
\nabla_{AA'} \xi_{BB'} + \nabla_{BB'}\xi_{AA'}=0.
\]
In general, $\xi_{AA'}$ will be a complex Killing vector. The Killing
form $F_{AA'BB'}$ associated to $\xi_{AA'}$ is defined by
\[
F_{AA'BB'} \equiv \tfrac{1}{2}\left( \nabla_{AA'} \xi_{BB'} - \nabla_{BB'} \xi_{AA'}   \right).
\]
In the cases where $\xi_{AA'}$ is real, we will consider the self-dual Killing form $\mathcal{F}_{AA'BB'}$ defined by
\[
\mathcal{F}_{AA'BB'} \equiv \tfrac{1}{2} (F_{AA'BB'} + \mbox{i} F^*_{AA'BB'}),
\]
where $F^*_{AA'BB'}$ is the Hodge dual of $F_{AA'BB'}$. Due to the symmetries of  of the self-dual Killing form one has that
\[
\mathcal{F}_{AA'BB'} = \mathcal{F}_{AB} \epsilon_{A'B'}, \quad \mathcal{F}_{AB} \equiv\tfrac{1}{2} F_{AQ'B}{}^{Q'} = \mathcal{F}_{BA}.
\]

\medskip
The characterisation of the Kerr spacetime discussed in
\cite{BaeVal10b} is, in turn, based on the following characterisation
proven by Mars \cite{Mar00}.

\begin{theorem}[Mars 1999, 2000]
\label{Theorem:Mars}
Let $(\mathcal{M},g_{\mu\nu})$ be a smooth vacuum spacetime with the
following properties:
\begin{itemize}
\item[(i)] $(\mathcal{M},g_{\mu\nu})$ admits a real Killing vector
$\xi_{AA'}$ such that the spinorial counterpart
of the Killing form of $\xi_{AA'}$ satisfies
\begin{equation}
\Psi_{ABCD} \mathcal{F}^{PQ} = \varphi \mathcal{F}_{AB},
\label{KeyProperty}
\end{equation}
with $\varphi$ a scalar;
\item[(ii)] $(\mathcal{M},g_{\mu\nu})$ contains a stationary
asymptotically flat 4-end, and $\xi_{AA'}$ tends to a time translation
at infinity and the Komar mass of the asymptotic end is non-zero.
\end{itemize}
Then $(\mathcal{M},g_{\mu\nu})$ is locally isometric to the Kerr spacetime.
\end{theorem}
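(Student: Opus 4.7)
The plan is to exploit the algebraic condition \eqref{KeyProperty}, which is extremely restrictive, and then to integrate the rigid structure it produces against the Killing vector $\xi_{AA'}$ and the asymptotic data of condition (ii). First I would unpack \eqref{KeyProperty} algebraically: reading it as an eigenvalue equation for the symmetric operator $\Psi_{ABCD}$ acting on the space of symmetric valence-$2$ spinors, with eigenspinor $\mathcal{F}_{AB}$ and eigenvalue $\varphi$. Writing $\mathcal{F}_{AB}$ in its canonical form --- either as a square $\alpha_A\alpha_B$ or as $\alpha_{(A}\beta_{B)}$ with $\alpha_A\beta^A\neq 0$ --- and comparing with the canonical forms of $\Psi_{ABCD}$, a case analysis should show that away from the zeros of $\mathcal{F}_{AB}$ the Weyl spinor is algebraically special with the principal spinors of $\Psi_{ABCD}$ aligned with those of $\mathcal{F}_{AB}$. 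Using the asymptotic condition (ii), which forces $\mathcal{F}_{AB}$ to approach the Coulombian form associated to a time translation with non-zero mass, the Petrov type N alternative can be ruled out on an open dense set, so $(\mathcal{M},g_{\mu\nu})$ is of Petrov type D and $\Psi_{ABCD}$ takes the canonical form \eqref{AltPsi} with $\upsilon_{AB}\propto \mathcal{F}_{AB}$.

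The next step is to convert this pointwise information into an integrability statement. At this point, the Killing spinor of a type D vacuum spacetime provided by formula \eqref{Formula:kappa}, namely $\kappa_{AB}=\psi^{-1/3}\upsilon_{AB}$, is available, and it is aligned with $\mathcal{F}_{AB}$. I would then compute $\eta_{AA'}\equiv\nabla^B{}_{A'}\kappa_{AB}$, which is a complex Killing vector, and compare it with $\xi_{AA'}$. The identity \eqref{KeyProperty} should force an algebraic relation tying $\eta_{AA'}$ to $\xi_{AA'}$ and its dual, so that $(\mathcal{M},g_{\mu\nu})$ admits (at least) two commuting Killing vectors: $\xi_{AA'}$ itself and a second, axial Killing vector built from $\mathrm{Im}\,\eta_{AA'}$. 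Equivalently, one verifies that the Mars--Simon tensor, essentially the trace-free part of $\Psi_{ABCD}$ minus the type D combination $\psi(\mathcal{F}_{AB}\mathcal{F}_{CD}/\mathcal{F}_{PQ}\mathcal{F}^{PQ}+\tfrac{1}{6}h_{ABCD})$, vanishes identically. This is the local characterising identity for the Kerr--NUT--(a)dS family of vacuum type D spacetimes.

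Finally I would pin down the specific member of that family using condition (ii). Introducing coordinates $(t,\phi,\lambda,\mu)$ adapted to the two commuting Killing vectors reduces the vacuum equations, combined with the type D alignment already established, to the well-known Carter--Kinnersley canonical form for vacuum type D metrics with two Killing vectors. Asymptotic flatness at the end removes the NUT parameter and any cosmological or acceleration parameters, leaving the two-parameter Kerr family; normalising $\xi_{AA'}$ to a unit time translation at infinity and reading off the Komar mass fixes $m\neq 0$, while the asymptotic form of $\varphi$ (or equivalently of $\mathcal{F}_{AB}$) determines the angular momentum parameter $a$. The main obstacle is the middle step: turning the pointwise algebraic identity into the \emph{global} existence of a second, real, axial Killing vector commuting with $\xi_{AA'}$. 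A priori $\eta_{AA'}$ could be real and proportional to $\xi_{AA'}$ on an open set, in which case the spacetime would belong to the generalised Kerr--NUT class without a genuine axisymmetry. Ruling this out requires a careful argument combining the reality of $\xi_{AA'}$, the non-vanishing of $\varphi$ (a consequence of the non-zero Komar mass via the asymptotic expansion of $\mathcal{F}_{AB}$), and the global regularity of the Killing spinor $\kappa_{AB}$ constructed from $\mathcal{F}_{AB}$ through formula \eqref{KillingSpinorFormula}.
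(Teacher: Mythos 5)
This theorem is not proved in the paper at all: it is quoted verbatim from Mars \cite{Mar00} as the external input on which the Killing-spinor characterisation of Kerr rests, so there is no internal argument to compare yours against. Judged on its own terms, your sketch contains a genuine gap at its very first step. The eigenspinor condition \eqref{KeyProperty} does \emph{not}, pointwise and algebraically, force $\Psi_{ABCD}$ to be algebraically special. The Weyl spinor, viewed as the endomorphism $\Psi_{AB}{}^{CD}$ of the three-dimensional complex space of symmetric valence-2 spinors, always possesses eigenspinors, whatever the Petrov type; in particular a type I Weyl spinor admits three independent non-null eigenspinors, so in the non-null case $\mathcal{F}_{AB}=\alpha_{(A}\beta_{B)}$, $\alpha_A\beta^A\neq 0$ --- which is the relevant case given condition (ii) --- your ``case analysis of canonical forms'' yields no degeneracy whatsoever. (Only in the null case $\mathcal{F}_{AB}=\alpha_A\alpha_B$ does contraction give $\Psi_{ABCD}\alpha^B\alpha^C\alpha^D=0$, i.e.\ a repeated principal spinor.) Moreover, even granting type D, an eigenspinor of $\Psi_{ABCD}$ need not be proportional to $\upsilon_{AB}$, since $\alpha_A\alpha_B$ and $\beta_A\beta_B$ span a second eigenspace. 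The entire force of Mars's hypothesis comes from combining \eqref{KeyProperty} with the \emph{differential} identity $\nabla_{AA'}\mathcal{F}_{BC}=-\xi^{D}{}_{A'}\Psi_{ABCD}$ satisfied by Killing forms in vacuum, the Bianchi identities, and the Ernst potential of $\xi_{AA'}$; that integrability analysis, which your proposal omits entirely, is where the theorem is actually proved.

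The later steps inherit this problem and add hand-waving of their own. The claim that \eqref{KeyProperty} is ``equivalent'' to the vanishing of the Mars--Simon tensor is precisely the non-trivial improvement of Mars's 2000 paper over his 1999 one, and cannot be asserted; and vanishing of the Mars--Simon tensor does not by itself locally characterise the Kerr--NUT family --- one must additionally derive the functional relations expressing $\varphi$ and $\mathcal{F}_{PQ}\mathcal{F}^{PQ}$ in terms of the Ernst potential, and the asymptotic end of condition (ii) is used there to fix integration constants and to exclude the degenerate branch $\mathcal{F}_{PQ}\mathcal{F}^{PQ}\equiv 0$, not merely at the very end to ``remove the NUT parameter.'' Your closing paragraph correctly identifies that producing a second, genuinely independent real Killing vector is delicate, but the route through the Kinnersley classification of type D vacua is both heavier than and structurally different from Mars's direct construction of adapted coordinates from the Ernst potential. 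As it stands the proposal does not constitute a proof, nor a faithful outline of one.
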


\medskip
\noindent
\textbf{Remark.} A stationary asymptotically flat 4-end is an open
submanifold $\mathcal{M}_\infty\subset \mathcal{M}$ diffeomorphic to
to $I\times (\Real^3\setminus \mathcal{B}_R)$, where $I\subset \Real$
is an open interval and $\mathcal{B}_R$ is a closed ball of radius $R$
such that in local coordinates $(t,x^i)$ defined by the diffeomorphism
the metric satisfies
\[
|g_{\mu\nu}-\eta_{\mu\nu}| + |r\partial_i g_{\mu\nu}| \leq C r^{-\alpha}, \quad \partial_t g_{\mu\nu}=0,
\]
with $C$, $\alpha\geq 1$ constants, $\eta_{\mu\nu}$ the Minkowski metric and
\[
r = \sqrt{ \left(x^1\right)^2 + \left(x^2\right)^2 + \left(x^3\right)^2 }.
\]
In this context the notions of Komar and ADM mass coincide.

\bigskip
We want to relate the notion of Killing form and that of Killing
spinors. As discussed in \cite{BaeVal10b}, if $\xi_{AA'}$ is real, the
commutators for a vacuum spacetime readily yield that
\begin{equation}
\mathcal{F}_{AB}=\tfrac{3}{4} \Psi_{ABPQ} \kappa^{PQ}.
\label{KillingSpinorRelation}
\end{equation}
Now, vacuum spacetimes admitting a Killing spinor, $\kappa_{AB}$, can
only be of Petrov type D, N or O. If the spacetime is of type O at
some point (so that $\Psi_{ABCD}=0$), then
\eqref{KillingSpinorRelation} shows that $\mathcal{F}_{AB}=0$, and the
relation \eqref{KeyProperty} is satisfied trivially. If the spacetime
is of Petrov type N, then $\kappa_{AB}$ has a repeated principal
spinor which coincides with the repeated principal spinor of
$\Psi_{ABCD}$ ---see e.g. \cite{Jef84}. Hence, again one has that
$\mathcal{F}_{AB}=0$, and \eqref{KeyProperty} is satisfied
trivially. For Petrov type D spacetimes with a Killing spinor such
that $\xi_{AA'}$ is real,  it has already been shown in \cite{BaeVal10b} that
\eqref{KeyProperty} is satisfied.

\medskip
From the discussion in the previous paragraph, we obtain the following
characterisation of the Kerr spacetime in terms of Killing spinors.

\begin{theorem}
\label{Theorem:CharacterisationKerrData}
A smooth vacuum spacetime $(\mathcal{M},g_{\mu\nu})$ is locally
isometric to the Kerr spacetime if and only if the following conditions
are satisfied:
\begin{itemize}
\item[(i)] there exists a Killing spinor $\kappa_{AB}$ such that the
associated Killing vector $\xi_{AA'}$ is real;
\item[(ii)] the spacetime $(\mathcal{M},g_{\mu\nu})$ has a stationary
asymptotically flat 4-end with non-vanishing mass in which $\xi_{AA'}$
tends to a time translation.
\end{itemize}

\end{theorem}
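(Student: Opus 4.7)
The plan is to reduce Theorem~\ref{Theorem:CharacterisationKerrData} to Mars' characterisation (Theorem~\ref{Theorem:Mars}) by verifying the latter's algebraic hypothesis on the Killing form pointwise, with a case split by Petrov type. The forward direction is standard: the Kerr spacetime admits the Killing spinor~\eqref{Formula:kappa} arising from its type~D structure, its associated $\xi_{AA'}$ coincides (up to a real constant) with the real stationary Killing vector, and the asymptotic conditions in~(ii) are well known. So I focus on the reverse direction.

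Assume conditions~(i) and~(ii). By the identity~\eqref{KillingSpinorRelation} established in~\cite{BaeVal10b}, the spinorial Killing form satisfies $\mathcal{F}_{AB}=\tfrac{3}{4}\Psi_{ABPQ}\kappa^{PQ}$. The spacetime integrability condition for the Killing spinor equation in vacuum, namely $\Psi_{(ABC}{}^{F}\kappa_{D)F}=0$ (this is the spacetime analogue of~\eqref{kspd2}), algebraically constrains $\Psi_{ABCD}$ so that at every point the Petrov type must be D, N or~O. My goal is to verify Mars' relation
\begin{equation*}
\Psi_{ABCD}\mathcal{F}^{CD}=\varphi\,\mathcal{F}_{AB}
\end{equation*}
pointwise, with $\varphi$ a scalar function.

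The argument splits into three cases. At points of type~O one has $\Psi_{ABCD}=0$, so $\mathcal{F}_{AB}=0$ by~\eqref{KillingSpinorRelation} and both sides of Mars' relation vanish (take $\varphi=0$). At points of type~N one has $\Psi_{ABCD}=\psi\,\alpha_{A}\alpha_{B}\alpha_{C}\alpha_{D}$ for a 4-fold repeated principal spinor $\alpha_{A}$. Expanding $\kappa_{AB}$ in a dyad $\{\alpha_{A},\beta_{A}\}$ and imposing $\Psi_{(ABC}{}^{F}\kappa_{D)F}=0$ forces $\alpha^{F}\kappa_{DF}\propto\alpha_{D}$, so $\alpha_{A}$ is one of the principal spinors of $\kappa_{AB}$ (the well-known alignment result of \cite{Jef84}). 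A short contraction, exploiting $\alpha^{F}\alpha_{F}=0$, then gives $\Psi_{ABPQ}\kappa^{PQ}=0$, so $\mathcal{F}_{AB}=0$ and Mars' relation holds trivially. At points of type~D the relation was already shown in~\cite{BaeVal10b}: using the canonical form~\eqref{AltPsi} together with $\kappa_{AB}=\psi^{-1/3}\upsilon_{AB}$, direct substitution yields $\Psi_{ABCD}\mathcal{F}^{CD}=-\tfrac{1}{3}\psi\,\mathcal{F}_{AB}$, so $\varphi=-\tfrac{1}{3}\psi$ on this open set.

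Combining the three cases, Mars' relation holds at every point of $\mathcal{M}$, with $\varphi$ determined by the type~D formula on the set where $\mathcal{F}_{AB}\neq 0$ and arbitrary (say zero) elsewhere, where the relation degenerates to $0=0$. Together with condition~(ii), Theorem~\ref{Theorem:Mars} then yields that $(\mathcal{M},g_{\mu\nu})$ is locally isometric to the Kerr spacetime. The main obstacle is the type~N step: one must confirm carefully that the spacetime integrability condition forces the Weyl and Killing spinors to share a principal direction and hence kills the right-hand side of~\eqref{KillingSpinorRelation}. Once this alignment lemma is in hand, the rest is a routine case distinction and a direct appeal to Mars' theorem, with no hypothesis needed on which Petrov types occur.
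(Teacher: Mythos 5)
Your proposal is correct and follows essentially the same route as the paper's own argument: reduce to Mars' theorem by verifying the Killing-form condition \eqref{KeyProperty} pointwise via the relation \eqref{KillingSpinorRelation} and a case split over the only admissible Petrov types D, N, O, with the type N case handled by the alignment result of \cite{Jef84} and the type D case quoted from \cite{BaeVal10b}. The extra detail you supply in the type N contraction and the explicit value of $\varphi$ in the type D case are consistent with, and slightly more explicit than, the paper's discussion.
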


As a consequence of this theorem, the \emph{a priori} conditions on
the Petrov type of the Weyl required in Theorem 28 of \cite{BaeVal10b}
can be dropped.


\begin{thebibliography}{10}

\bibitem{BaeVal10a}
T.~B\"{a}ckdahl \& J.~A. {Valiente Kroon},
\newblock {\em Geometric invariant measuring the deviation from Kerr data},
\newblock Phys. Rev. Lett. {\bf 104}, 231102 (2010).

\bibitem{BaeVal10b}
T.~B\"{a}ckdahl \& J.~A. {Valiente Kroon},
\newblock {\em On the construction of a geometric invariant measuring the
  deviation from Kerr data},
\newblock In {\tt arXiv:1005.0743[gr-qc]}, 2010.

\bibitem{Bar86}
R.~Bartnik,
\newblock {\em The mass of an asymptotically flat manifold},
\newblock Comm. Pure Appl. Math. , 661 (1986).

\bibitem{Can81}
M.~Cantor,
\newblock {\em Elliptic operators and the decomposition of tensor fields},
\newblock Bull. Am. Math. Soc. {\bf 5}, 235 (1981).

\bibitem{GarVal08a}
A.~Garc\'{\i}a-Parrado \& J.~A. {Valiente Kroon},
\newblock {\em Kerr spinor initial data sets},
\newblock J. Geom. Phys. {\bf 58}, 1186 (2008).

\bibitem{Ger68}
R.~Geroch,
\newblock {\em Spinor structure of spacetimes in general relativity {I}},
\newblock J. Math. Phys. {\bf 9}, 1739 (1968).

\bibitem{Ger70c}
R.~Geroch,
\newblock {\em Spinor structure of spacetimes in general relativity {II}},
\newblock J. Math. Phys. {\bf 11}, 343 (1970).

\bibitem{IonKla09a}
A.~D. Ionescu \& S.~Klainerman,
\newblock {\em On the uniqueness of smooth, stationary black holes in vacuum},
\newblock Inventiones mathematicae {\bf 175}, 35 (2009).

\bibitem{Jef84}
B.~P. Jeffryes,
\newblock {\em Space-times with two-index Killing spinors},
\newblock Proc. Roy. Soc. Lond. A {\bf 392}, 323 (1984).

\bibitem{Kin69}
W.~Kinnersley,
\newblock {\em Type {D} vacuum metrics},
\newblock J. Math. Phys. {\bf 10}, 1195 (1969).

\bibitem{Kle99}
J.~Klenk,
\newblock {\em Existence of stationary vacuum solutions of Einstein's equations
  in an exterior domain},
\newblock J. Austral. Math. Soc. Ser. B {\bf 41}, 231 (1991).

\bibitem{LocMcO85}
R.~B. Lockhardt \& R.~C. Mc{O}wen,
\newblock {\em Elliptic differential operators on noncompact manifolds},
\newblock Ann. Sc. Norm. Sup. Pisa {\bf 12}, 409 (1985).

\bibitem{Loc81}
R.~B. Lockhart,
\newblock {\em Fredholm properties of a class of elliptic operators on
  non-compact manifolds},
\newblock Duke Math. Journal {\bf 48}, 289 (1981).

\bibitem{Mar00}
M.~Mars,
\newblock {\em Uniqueness properties of the Kerr metric},
\newblock Class. Quantum Grav. {\bf 17}, 3353 (2000).

\bibitem{PenRin84}
R.~Penrose \& W.~Rindler,
\newblock {\em Spinors and space-time. {V}olume 1. {T}wo-spinor calculus and
  relativistic fields},
\newblock Cambridge University Press, 1984.

\bibitem{PenRin86}
R.~Penrose \& W.~Rindler,
\newblock {\em Spinors and space-time. {V}olume 2. {S}pinor and twistor methods
  in space-time geometry},
\newblock Cambridge University Press, 1986.

\bibitem{Reu89}
O.~Reula,
\newblock {\em On existence and behaviour of asymptotically flat solutions to
  the stationary Einstein equations},
\newblock Comm. Math. Phys. {\bf 122}, 615 (1989).

\bibitem{Som80}
P.~Sommers,
\newblock {\em Space spinors},
\newblock J. Math. Phys. {\bf 21}, 2567 (1980).

\bibitem{SKMHH}
H.~Stephani, D.~Kramer, M.~A.~H. MacCallum, C.~Hoenselaers, \& E.~Herlt,
\newblock {\em Exact Solutions of Einstein's Field Equations},
\newblock Cambridge University Press, 2003,
\newblock Second edition.

\bibitem{WloRowLaw95}
J.~T. Wloka, B.~Rowley \& B.~Lawruk,
\newblock {\em Boundary value problems for elliptic systems},
\newblock Cambridge University Press, 1995.

\end{thebibliography}

\end{document}